\newif\ifams
\let\olddesc\description
\let\oldenddesc\enddescription
\providecommand{\IEEEsetlabelwidth}[1]{}
\renewenvironment{description}[1][]{\olddesc}{\oldenddesc}
\renewcommand{\paragraph}{\subsubsection*}
\def\NAT@spacechar{~}%
\tikzstyle{square}=[rounded corners=4pt,regular polygon,regular polygon
\tikzstyle{lsquare}=[rounded corners=4pt,draw,minimum
\tikzstyle{every node}=[font=\footnotesize]
\tikzstyle{every edge}=[draw,>=stealth',shorten >=1pt,thin]
\newcommand{\opfont}{\textsc}
\newcommand{\ar}{r}%
\newcommand{\X}{\textsc{Var}}
\newcommand{\N}{\?N}
\newcommand{\ru}{\?R}
\newcommand{\terms}{\opfont{Terms}_{\N}}
\newcommand{\size}[1]{\opfont{size}(#1)}
\newcommand{\vars}[1]{\opfont{var}(#1)}
\newcommand{\height}[1]{\opfont{height}(#1)}
\newcommand{\hinc}{\opfont{hinc}}
\newcommand{\sinc}{\opfont{sinc}}
\newcommand{\ntsize}[1]{\opfont{ntsize}(#1)}
\newcommand{\el}[1]{\opfont{el}(#1)}
\newcommand{\rt}[1]{\opfont{root}(#1)}
\DeclarePairedDelimiter{\tup}{(}{)}
\newcommand{\eqby}[1]{\stackrel{\raisebox{-1pt}[0pt][0pt]{\tiny #1}}{=}}
\newcommand{\eqdef}{\eqby{def}}
\newcommand{\dom}[1]{\opfont{supp}(#1)}
\newcommand{\step}[1]{\xrightarrow{#1}}
\newcommand{\dstep}[1]{\xRightarrow{#1}}
\newcommand{\lts}{\?L_\?G}
\newcommand{\rhs}{\opfont{rhs}}
\newcommand{\pairs}[1][i]{\opfont{Pairs}_{#1}}
\newcommand{\F}[1]{\ComplexityFont{F}_{\!#1}}
\newcommand{\FGH}[1]{\mathscr{F}_{\!<#1}}
\newcommand{\TOWER}{\ComplexityFont{TOWER}}
\newcommand{\ACK}{\ComplexityFont{ACKERMANN}}
\renewcommand{\eqby}[1]{\mathrel{\raisebox{-.1ex}{\ensuremath{\stackrel{\raisebox{-.25ex}{\scalebox{.5}{\upshape\textrm{#1}}}}{=}}}}}
\providecommand{\urlstyle}[1]{}
\providecommand{\doi}[1]{\href{http://dx.doi.org/#1}{\nolinkurl{doi:#1}}}
\newtheorem{theorem}{Theorem}
\newtheorem{lemma}[theorem]{Lemma}
\newtheorem{fact}[theorem]{Fact}
\newtheorem{corollary}[theorem]{Corollary}
\newtheorem{proposition}[theorem]{Proposition}
\theoremstyle{definition}
\newtheorem*{problem}{Problem}
\crefname{section}{Sec.}{Sections}
\Crefname{section}{Section}{Sections}
\crefname{subsection}{Sec.}{Sections}
\Crefname{subsection}{Section}{Sections}
\crefname{subsubsection}{Sec.}{Sections}
\Crefname{subsubsection}{Section}{Sections}
\crefname{theorem}{Thm.}{theorems}
\Crefname{theorem}{Theorem}{Theorems}
\crefname{lemma}{Lem.}{lemmata}
\Crefname{lemma}{Lemma}{Lemmata}
\crefname{fact}{Fact}{facts}
\Crefname{fact}{Fact}{Facts}
\crefname{corollary}{Cor.}{corollaries}
\Crefname{corollary}{Corollary}{Corollaries}
\crefname{proposition}{Prop.}{propositions}
\Crefname{proposition}{Proposition}{Propositions}
\crefname{claim}{Claim}{claims}
\Crefname{claim}{Claim}{Claims}
\crefname{definition}{Def.}{definitions}
\Crefname{definition}{Definition}{Definitions}
\crefname{example}{Ex.}{examples}
\Crefname{example}{Example}{Examples}
\crefname{remark}{Rmk.}{remarks}
\Crefname{remark}{Remark}{Remarks}
\crefname{figure}{Fig.}{figures}
\Crefname{figure}{Figure}{Figures}
\crefname{table}{Tab.}{tables}
\Crefname{table}{Table}{Tables}
\begin{document}
\title[Bisimulation of First-Order Grammars is {\textsf{ACKERMANN}}-Complete]{Bisimulation Equivalence of First-Order Grammars is {\textsf{ACKERMANN}}-Complete}
\author[P.~Jan\v{c}ar and S.~Schmitz]{Petr Jan\v{c}ar$^1$ and Sylvain Schmitz$^{2,3}$}
\address{$^1$~Dept of Computer Science, Faculty of Science\\
  Palack\'y University in Olomouc\\Czechia}
\address{$^2$~LSV, ENS Paris-Saclay \& CNRS
\\Universit\'e Paris-Saclay
\\France
}
\address{$^3$~IUF, France}

\begin{abstract}
Checking whether two pushdown automata with restricted silent actions
are weakly bisimilar was shown decidable by \citeauthor{senizergues98}
(\citeyear{senizergues98}, \citeyear{senizergues05}).  We provide the
first known complexity upper bound for this famous problem, in the
equi\-valent setting of first-order grammars.  This
\ifams{\smaller\textsf{ACKERMANN}}\else{\textbf{\textsf{ACKERMANN}}}\fi\ upper bound is optimal, and we
also show that strong bisimilarity is primitive-recursive when the
number of states of the automata is fixed.

\end{abstract}
\maketitle

\section{Introduction}
\label{sec-intro}
\emph{Bisimulation equivalence} plays a central role among the many
notions of semantic equivalence studied in verification and
concurrency theory~\citep{vanglabbeek01}.  Indeed, two bisimilar
processes always satisfy exactly the same specifications written in
modal logics~\citep{vanbenthem75} or in the modal
$\mu$-calculus~\citep{janin96}, allowing one to replace for instance a
naive implementation with a highly optimised one without breaking the
conformance.  As a toy example, the two recursive Erlang functions
below implement the same stateful message relaying service, that
either receives \lstinline[language=erlang]!{upd, M1}! and updates its
internal message from~M to~M1, or receives
\lstinline[language=erlang]!{rel,C}! and sends the message~M to the
client~C.

\ifams\hspace*{10pt}\fi\begin{minipage}[b]{\linewidth}\begin{lstlisting}[language=erlang,basicstyle=\small,literate=
               {->}{$\rightarrow{}$}{1}{,,}{\hspace{7.3pt}}{1},columns=flexible,numbers=left,firstnumber=1,numberstyle=\tiny,xleftmargin=-10pt,numbersep=2pt]
serverA(M) ->                     serverB(M) ->
  receive                           M2 = receive
    {upd, M1} -> serverA(M1);         {upd, M1} -> M1;
    {rel, C }  -> C!M,                {rel, C }  -> C!M, M;
                  serverA(M);       end,
  end.                            ,,serverB(M2).
\end{lstlisting}\end{minipage}
The two programs are weakly bisimilar if we only observe
the input (\lstinline[language=erlang]{receive}) and output
(\lstinline[language=erlang]{C!M}) actions, but the one on the left is
not tail-recursive and might perform poorly compared to the one on the
right.

In a landmark \citeyear{senizergues98} paper,
\citet{senizergues98,senizergues05} proved the decidability of
bisimulation equivalence for rooted equational graphs of finite
out-degree.  The proof extends his previous seminal
result~\citep{senizergues97,senizergues01}, which is the decidability
of language equivalence for deterministic pushdown automata (DPDA),
and entails that weak bisimilarity of pushdown processes where silent
actions are deterministic is decidable; a silent action
(also called an $\varepsilon$-step) is deterministic if it has 
no alternative when enabled.
Because the control flow of a first-order recursive program is readily
modelled by a pushdown process, one can view this result as showing
that the equivalence of recursive programs (like the two Erlang
functions above) is decidable as far as their observable behaviours
are concerned, provided silent moves are deterministic.
Regarding decidability, \citeauthor{senizergues98}' result is
optimal in the sense that bisimilarity becomes undecidable if 
we consider either
nondeterministic (popping) $\varepsilon$-steps~\citep{jancar08}, 
or second-order pushdown processes with no 
$\varepsilon$-steps~\citep{broadbent12}.  Note that the decidability
border was also refined in~\citep{yin14} by considering 
branching bisimilarity, a stronger version of weak bisimilarity.

\paragraph{Computational Complexity}

While this delineates the decidability border for equivalences of
pushdown processes, the computational complexity of the bisimilarity
problem is open.  \Citeauthor{senizergues98}' algorithm consists in
two semi-decision procedures, with no clear means of bounding its
complexity, and subsequent works like~\citep{jancar14} have
so far not proven easier to analyse.  We know however that this
complexity must be considerable, as the problem is \TOWER-hard in the
real-time case (i.e., without silent actions, hence for
\emph{strong} bisimilarity)~\citep{benedikt13} and
\ACK-hard in the general case (with deterministic silent actions)~\citep{jancarhard}---we are employing
here the `fast-growing' complexity classes defined
in~\citep{schmitz16}, where $\TOWER=\F 3$ is the
lowest non elementary class and $\ACK=\F\omega$ the
lowest non primitive-recursive one.

In fact, the precise complexity of deciding equivalences for pushdown
automata and their restrictions is often not known---as is commonplace
with infinite-state processes~\citep{srba04}.  For instance, 
language equivalence of deterministic pushdown automata is
\P-hard and was shown to be in \TOWER\ by
\citet{stirling02} (see~\citep{jancarhard} for an explicit upper
bound), and 
bisimilarity of BPAs (i.e., real-time pushdown
processes with a single state) is 
\ComplexityFont{EXPTIME}-hard~\citep{kiefer13} and in
\ComplexityFont{2EXPTIME}~\citep{burkart95} (see~\citep{jancar13} for an
explicit proof).
There are also a few known completeness results in restricted cases:
bisimilarity of
normed BPAs is \P-complete~\citep{hirshfeld96}
(see~\citep{czerwinski10} for the best known upper bound),
bisimilarity of real-time one-counter
processes (i.e., of pushdown processes with a singleton stack
alphabet) is \PSPACE-complete~\citep{bohm14}, and bisimilarity of
visibly pushdown processes is 
\ComplexityFont{EXPTIME}-complete~\citep{srba09}.

\paragraph{Contributions}

In this paper, we prove that the bisimilarity problem for pushdown
processes 
is in $\ACK$,
even the weak bisimilarity problem when
silent actions are deterministic.  
Combined with the
already mentioned lower bound 
from~\citep{jancarhard},
this shows
the problem to be \ACK-complete.  
This is the first instance of a
complexity completeness result in the line of research originating from
\citeauthor{senizergues97}'
work~\citep{senizergues97,senizergues98,senizergues01,senizergues05};
see \cref{tab-cmplx}.

\begin{table}[tbp]
  \begin{threeparttable}
  \caption{\ifams The complexity of equivalence problems over pushdown
  processes.\else The Complexity of Equivalence Problems over Pushdown Processes\fi}
  \label{tab-cmplx}
  \centering
  \begin{tabular}{lcc}  
    \toprule
    Problem              & Lower bound & Upper bound \\
    \midrule
    DPDA lang.\ equ.\!\!\!\! & \P  & \TOWER~\citep{stirling02,jancarhard}  \\
    strong bisim.\       & \!\!\TOWER~\citep{benedikt13}\!\! &\!\! \ACK~[this paper] \\
    weak bisim.\tnote{$a$} & \!\!\ACK~\citep{jancarhard}\!\!   &\!\! \ACK~[this paper] \\
    \bottomrule
  \end{tabular}
  \begin{tablenotes}
  \item[$a$] silent actions must be deterministic
  \end{tablenotes}
  \end{threeparttable}
\end{table}

Rather than working with rooted equational graphs of finite out-degree
or with pushdown processes with deterministic silent actions, our
proof is cast in the formalism of
\emph{first-order grammars} (see \cref{sec-fog}), which are term
rewriting systems with a head rewriting semantics, and are known to
generate the same class of graphs~\cite{caucal95}.

Our proof heavily relies on the main novelty from~\citep{jancar14}:
the bisimilarity of two arbitrary terms according to a first-order
grammar essentially hinges on a finite 
\emph{basis}
of pairs of \emph{non-equivalent terms}, 
which can be constructed from the grammar independently of the terms
provided as input.  The basis provides a number that allows us to
compute a bound on the `equivalence-level' of two non-equivalent
terms; this is the substance of the decision procedure
(see \cref{sec-bisim}).  Both in~\citep{jancar14} and in its reworked
version in~\citep{jancar18},
such a 
basis is obtained through a brute force argument, which yields no
complexity statement.  In \cref{sec-algo} we exhibit a concrete
algorithm computing the
basis, and we analyse its complexity
in the framework of~\citep{schmitz14,schmitz16,schmitz17}
in \cref{sec-upb}, yielding the \ACK\ upper bound.

Finally, although our results do not match the \TOWER\ lower bound
of~\citet{benedikt13} in the case of real-time pushdown processes, we
nevertheless show in \cref{sec-pda} that bisimilarity becomes
primitive-recursive in that case if additionally the number of 
control
states
of the pushdown processes is fixed.

\section{First-Order Grammars}
\label{sec-fog}
\emph{First-order grammars} are labelled term rewriting systems with a
head rewriting semantics.  They are a natural model of first-order
functional programs with a call-by-name semantics, and were shown to
generate the class of rooted equational graphs of finite out-degree by
\citet{caucal92,caucal95}, where they are called \emph{term
  context-free grammars}.  Here we shall use the terminology and
notations from~\citep{jancar18}.

\subsection{Regular Terms}

Let $\N$ be a finite ranked alphabet, i.e., where each symbol $A$
in~$\N$ comes with an arity $\ar(A)$ in~$\+N\eqdef\{0,1,2,\dots\}$,
and 
$\X\eqdef\{x_1,x_2,\dots\}$
a countable set of variables, all with
arity~zero.  We work with possibly infinite \emph{regular terms}
over~$\N$ and~$\X$, i.e., terms with finitely many distinct subterms.
Let $\terms$ denote the set of all regular terms over~$\N$ and~$\X$.
We further use $A,B,C,D$ for nonterminals, and $E,F$ for
terms, possibly primed and/or with subscripts.

\paragraph{Representations}
Such terms can be represented by finite directed graphs as shown
in \cref{fig-terms}, where each node has a label in $\N\cup\X$ and a
number of ordered outgoing arcs equal to its arity.  The unfolding of
the graph representation is the desired term, and there is a bijection
between the nodes of the \emph{least} graph representation of a
term~$E$ and the set of subterms of~$E$.
\begin{figure}[tbp]
  \centering\vspace*{-.1cm}
  \begin{tikzpicture}[auto,on grid]
    \node[square](1) {$A$};
    \node[square,below left =.75 and .7  of 1](2){$D$};
    \node[square,below      =.75         of 1](3){$x_5$};
    \node[square,below right=.75 and .7  of 1](4){$B$};
    \node[square,below left =.75 and .35 of 2](5){$x_5$};
    \node[square,below right=.75 and .35 of 2](6){$C$};
    \node[square,below left =.75 and .35 of 6](7){$x_2$};
    \node[square,below right=.75 and .35 of 6](8){$B$};
    \path[->,every node/.style={font=\tiny,inner sep=1pt,color=black!70}]
      (1) edge[swap] node {1} (2)
      (1) edge node {2}       (3)
      (1) edge node {3}       (4)
      (2) edge[swap] node {1} (5)
      (2) edge node {2}       (6)
      (6) edge[swap] node {1} (7)
      (6) edge node {2}       (8);
    \node[above left=.7 and .7 of 1](r){$\rt{E_1}$};
    \path[color=black!70]
      (r) edge                (1);
    \node[square,right=3 of 1](11) {$A$};
    \node[square,below left =.75 and .7  of 11](12){$D$};
    \node[square,below      =.75         of 11](13){$x_5$};
    \node[square,below right=.75 and .7  of 11](14){$B$};
    \node[square,below left =.75 and .35 of 12](15){$x_5$};
    \node[square,below right=.75 and .35 of 12](16){$C$};
    \node[square,below right=.75 and .35 of 16](18){$B$};
    \path[->,every node/.style={font=\tiny,inner sep=1pt,color=black!70}]
      (11) edge[swap] node {1} (12)
      (11) edge node {2}       (13)
      (11) edge node {3}       (14)
      (12) edge[swap] node {1} (15)
      (12) edge node {2}       (16)
      (16) edge node {2}       (18);
    \node[above left=.7 and .7 of 11](1r){$\rt{E_2}$};
    \path[color=black!70]
      (1r) edge                (11);
    \draw[->,>=stealth',shorten >=1pt,thin]
      (16.south west) .. controls (1.5,-4.1) and (1.5,1) .. (1.east);
    \node[font=\tiny,color=black!70,below left=.25 and .35 of 16]{1};
    \node[square,right=3 of 11](21) {$A$};
    \node[square,below left =.75 and .7  of 21](22){$D$};
    \node[square,below      =.75         of 21](23){$x_5$};
    \node[square,below right=.75 and .7  of 21](24){$B$};
    \node[square,below left =.75 and .35 of 22](25){$x_5$};
    \node[square,below right=.75 and .35 of 22](26){$C$};
    \node[square,below right=.75 and .35 of 26](28){$B$};
    \path[->,every node/.style={font=\tiny,inner sep=1pt,color=black!70}]
      (21) edge[swap] node {1} (22)
      (21) edge node {2}       (23)
      (21) edge node {3}       (24)
      (22) edge[swap] node {1} (25)
      (22) edge node {2}       (26)
      (26) edge node {2}       (28);
    \node[above left=.7 and .7 of 21](2r){$\rt{E_3}$};
    \path[color=black!70]
      (2r) edge                (21);
    \draw[->,>=stealth',shorten >=1pt,thin]
      (26.south west) .. controls (4,-4) and (4,0) .. (21.west);
    \node[font=\tiny,color=black!70,below left=.25 and .35 of 26]{1};
  \end{tikzpicture}\vspace*{-1.5cm}
  \caption{Graph representations of two finite terms $E_1$ and $E_2$,
  and of an infinite regular term~$E_3$.}
  \label{fig-terms}
\end{figure}
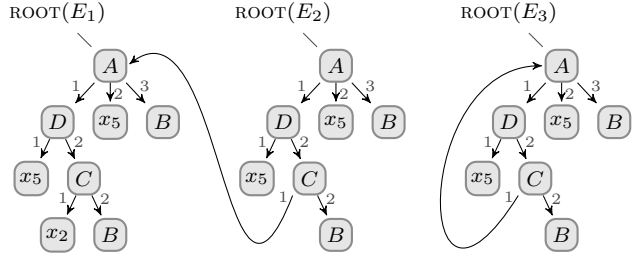

\paragraph{Size and Height}
We define the \emph{size} $\size E$ of a term~$E$ as its number of
distinct subterms.  For instance, $\size{E_1}=6$, $\size{E_2}=9$, and
$\size{E_3}=5$ in
\cref{fig-terms}.  For two terms $E$ and~$F$, we also denote by $\size{E,F}$
the number of distinct subterms of $E$ and~$F$; note that $\size{E,F}$
can be smaller than $\size E+\size F$, as they might share some
subterms.  For instance, $\size{E_1,E_2}=9$ in \cref{fig-terms}.  We
let $\ntsize E$ denote the number of distinct subterms of~$E$ with
root labels in~$\N$; e.g., $\ntsize{E_1}=4$ in \cref{fig-terms}.  
A term~$E$ is thus \emph{finite} if and only if 
its graph representation is
acyclic, in which case it has a \emph{height} $\height E$,
which is
the maximal length of a path from the root to a leaf; for instance
$\height{E_1}=3$ in \cref{fig-terms}.  Finally, we let $\vars E$
denote the set of variables occurring in~$E$, and let
$\vars{E,F}\eqdef\vars E\cup\vars F$; e.g.,
$\vars{E_1,E_2}=\{x_2,x_5\}$ in \cref{fig-terms}.

\subsection{Substitutions}

A \emph{substitution}~$\sigma$ is a map $\X\to\terms$
whose \emph{support} 
$\dom\sigma\eqdef\{x\in\X\mid \sigma(x)\neq x\}$
is finite.  This map is lifted to act 
over terms by
\begin{align*}
  x\sigma&\eqdef\sigma(x)\;,&
  A(E_1,\dots,E_{\ar(A)})\sigma&\eqdef
A(E_1\sigma,\dots,E_{\ar(A)}\sigma)
\end{align*}
for all $x$ in~$\X$, $A$ in~$\N$, and
$E_1,\dots,E_{\ar(A)}$ in~$\terms$.  For instance, in \cref{fig-terms},
$E_2=E_1\sigma$ if 
$\sigma(x_2)= E_1$ and $\sigma(x_5)= x_5$.

\subsection{Grammars}

A \emph{first-order grammar} is a tuple $\?G=\tup{\N,\Sigma,\ru}$
where $\N$ is a finite ranked alphabet of nonterminals, $\Sigma$ a
finite alphabet of actions, and $\ru$ a finite set of labelled term
rewriting rules of the form 
  $A(x_1,\dots,x_{\ar(A)})\step a E$
where
$A\in\N$, $a\in\Sigma$, and $E$ is a finite term in
$\terms$ with $\vars{E}\subseteq\{x_1,\dots,x_{\ar(A)}\}$.

\paragraph{Head Rewriting Semantics}

A first-order grammar $\?G=\tup{\N,\Sigma,\ru}$ defines an infinite
\emph{labelled transition system}
\begin{equation*}
  \lts\eqdef\tup{\terms,\Sigma,({\step a})_{a\in\Sigma}}
\end{equation*}
over $\terms$ as set of
states, $\Sigma$ as set of actions, and with a transition
relation 
${\step a}\subseteq\terms\times\terms$ for each $a\in\Sigma$,
where each rule $A(x_1,\dots,x_{\ar(A)})\step a E$ of~$\ru$ induces a
transition 
\begin{equation*}
  A(x_1,\dots,x_{\ar(A)})\sigma\step a E\sigma
\end{equation*}
for every
substitution~$\sigma$.  This means that rewriting steps can only occur
at the root of a term, rather than inside a context.  For instance,
the rules $A(x_1,x_2,x_3)\step a C(x_2,D(x_2,x_1))$ and
$A(x_1,x_2,x_3)\step b x_2$ give rise on the terms of
\cref{fig-terms} to the transitions
$E_1\step a C(x_5,D(x_5,D(x_5,C(x_2,B))))$ and $E_1\step b x_5$.
The transition relations $\step{a}$ are extended to $\step{w}$ for
words $w\in\Sigma^\ast$ in the standard way.

Note that variables $x\in\X$ are `dead', in that no
transitions can be fired from a variable.
In fact, in \cref{subsec:eqlevels} we discuss that for
technical reasons we
could formally assume that each variable~$x$ has its unique action~$a_x$
and a transition $x\step{a_x}x$.

\paragraph{Grammatical Constants}  Let us fix a first-order grammar
$\?G=\tup{\N,\Sigma,\ru}$.  We define its size as
\begin{align}\label{eq-gsize}
  |\?G|&\eqdef\sum_{A(x_1,\dots,x_{\ar(A)})\step a E\,\,\in\ru}\ar(A)+1+\size
   E\;.
    \intertext{%
Let $\rhs$ be the set of terms appearing on the right-hand sides
   of~$\ru$ (which are finite terms by definition).  We let}
 m&\eqdef\max_{A\in\N}\ar(A)\;,\label{eq-m}\\
 \hinc&\eqdef\max_{E\in\rhs}\height E-1\;,\label{eq-hinc}\\
 \sinc&\eqdef\max_{E\in\rhs}\ntsize E\label{eq-sinc}
\end{align}
bound respectively the maximal arity of its nonterminals, its maximal
height increase in one transition step, and its maximal size increase in one
transition step.

If $A(x_1,\dots,x_{\ar(A)})\step{w}x_i$ in~$\lts$ for some $i$
in~$\{1,\dots,\ar(A)\}$ and $w$ in~$\Sigma^\ast$, then we call~$w$
an \emph{$(A,i)$-sink word}. 
Observe that $w\neq\varepsilon$, hence $w=aw'$ with
$A(x_1,\dots,x_{\ar(A)})\step{a}E$ in~$\ru$ and $E\step{w'}x_i$, where
either $w'=\varepsilon$ and $E=x_i$ or $E$ `sinks' to~$x_i$ when
applying~$w'$.  Thus, for each 
$A\in\N$ and $i\in\{1,\dots,\ar(A)\}$
we can compute some shortest $(A,i)$-sink
word $w_{[A,i]}$ by dynamic programming;
in the cases where no $(A,i)$-sink
word exist, we can formally put  $w_{[A,i]}\eqdef\varepsilon$.
In turn, this
entails that the maximal length of shortest sink words satisfies
\begin{equation}\label{eq-d0}
  d_0\eqdef 1+\!\!\max_{A\in\N,1\leq i\leq\ar(A)}\!|w_{[A,i]}|\leq
  1+(2+\hinc)^{|\N|m}\;;
\end{equation}
here and in later instances, we let $\max\emptyset\eqdef 0$.

Finally, the following grammatical constant $n$ from~\citep{jancar18}
is important for us:
\begin{equation}\label{eq-n}
  n\eqdef m^{d_0}\;;
\end{equation}
note that~$n$ is at most doubly exponential in the size of~$\?G$.
This $n$ was chosen in~\citep{jancar18} so that each $E$
can be written as $E'\sigma$ where 
$\height{E'}\leq d_0$ and $\vars{E'}\subseteq\{x_1,\dots,x_n\}$, and it
is guaranteed that each path $E\step{w}F$ where $|w|\leq d_0$
can be presented as 
$E'\sigma\step{w}F'\sigma$ where $E'\step{w}F'$.
Put simply: $n$ bounds the number of depth-$d_0$ subterms for each term
$E$.

\section{Bisimulation Equivalence}
\label{sec-bisim}
Bisimulation
equivalence has been introduced independently in the
study of modal logics~\citep{vanbenthem75} and in that of concurrent
processes \citep{milner80,park81}.  We recall here the basic notions
surrounding bisimilarity before we introduce the key notion
of \emph{candidate bases} as defined in~\citep{jancar18}.

\subsection{Equivalence Levels}\label{subsec:eqlevels}

Consider a labelled transition system
\begin{equation*}
\?L=\tup{\?S,\Sigma,({\step a})_{a\in\Sigma}}
\end{equation*}
like the one defined
by a first-order grammar, with set of states~$\?S$, set of
actions~$\Sigma$, and a transition relation
${\step a}\subseteq\?S\times\?S$ for each $a$ in~$\Sigma$.  We work in
this paper with \emph{image-finite} labelled transition systems, where
$\{s'\in\?S\mid s\step a s'\}$ is finite for every~$s$ in~$\?S$ and
$a$ in~$\Sigma$.  In this setting, the coarsest (strong)
\emph{bisimulation}~$\sim$ can be defined through a chain
\begin{equation*}
{\sim_0}\supseteq{\sim_1}\supseteq\cdots\supseteq{\sim} 
\end{equation*}
of
equivalence relations over $\?S\times\?S$: let
${\sim_0}\eqdef\?S\times\?S$ and for each~$k$ in~$\+N$, let
$s\sim_{k+1} t$ if $s\sim_k t$ and
\begin{description}[\IEEEsetlabelwidth{(zag)}]
\item[(zig)] if $s\step a s'$ for some $a\in\Sigma$,
  then there exists $t'$ such that $t\step a t'$ and $s'\sim_k t'$, and
\item[(zag)] if $t\step a t'$ for some $a\in\Sigma$,
  then there exists $s'$ such that $s\step a s'$ and $s'\sim_k t'$.
\end{description}
We put
$\sim_\omega\eqdef\bigcap_{k\in\+N}{\sim_k}$; 
hence
${\sim}={\sim_\omega}$.  

For each pair $s,t$ of states in~$\?S$, we
may then define its \emph{equivalence level} $\el{s,t}$
in~$\omega+1=\+N\uplus\{\omega\}$
as
\begin{equation}\label{eq-el}
  \el{s,t}\eqdef\sup\{k\in\+N\mid s\sim_k t\}\;.
\end{equation}Here we should add that---to be
consistent with~\citep{jancar18}---we stipulate that $\el{x,E}=0$ when
$E\neq x$; in particular $\el{x_i,x_j}=0$ when $i\neq j$.  This would
automatically hold if we equipped each $x\in\X$ with a special
transition $x\step{a_x}x$ in $\lts$, as we already mentioned.  This
stipulation guarantees that $\el{E,F}\leq\el{E\sigma,F\sigma}$.

Two states $s,t$ are (strongly) \emph{bisimilar} if $s\sim t$, which
is if and only if $\el{s,t}=\omega$. 
We will later show an algorithm computing the equivalence level 
of two given terms in the labelled transition
system defined by a given first-order grammar.
The main decision problem in which we are interested 
is the following.
\begin{problem}[Bisimulation]
\hfill\\[-1.5em]\begin{description}[\IEEEsetlabelwidth{question}]
\item[input] A first-order grammar $\?G=\tup{\N,\Sigma,\ru}$ and two
  terms $E,F$ in~$\terms$.
\item[question] Is $\el{E,F}=\omega$ in the labelled transition
  system~$\lts$?
\end{description}\end{problem}  

\subsection{Bisimulation Game}\label{sub-game}
Observe that the following variant of the bisimulation problem is decidable.
\begin{problem}[Bounded Equivalence Level]
\hfill\\[-1.5em]\begin{description}[\IEEEsetlabelwidth{question}]
\item[input] A first-order grammar $\?G=\tup{\N,\Sigma,\ru}$, two
  terms $E,F$ in~$\terms$, and $e$ in~$\+N$.
\item[question] Is $\el{E,F}\leq e$ in the labelled transition
  system~$\lts$?
\end{description}\end{problem}
Indeed, as is well-known, the zig-zag condition can be recast as a
\emph{bisimulation game}  between two players called Spoiler and
Duplicator.  A position of the game is a pair $(s_1,s_2)\in\?S\times\?S$.
Spoiler wants to prove that the two states are not bisimilar, while
Duplicator wants to prove that they are bisimilar.  The game proceeds
in rounds; in each round,
\begin{itemize}
\item Spoiler chooses $i\in\{1,2\}$ and a
transition $s_i\step a s'_i$ (if no such transition exists, Spoiler
loses), then
\item Duplicator chooses a transition $s_{3-i}\step a s'_{3-i}$ with
the same label~$a$ (if no such transition exists, Duplicator loses);
\end{itemize}
the game then proceeds to the next round from position $(s'_1,s'_2)$.
Then $\el{s_1,s_2}\leq k$ if and only if Spoiler has a strategy to win
in the $(k{+}1)$th round at the latest
when starting the game from $(s_1,s_2)$.
Note that this game is determined and memoryless strategies suffice.

Thus, the bounded equivalence level problem can be solved by an
alternating Turing machine that first writes the representation of~$E$
and~$F$ on its tape, and then plays at most~$e$ rounds of the
bisimulation game, where each round requires at most a polynomial
number of computational steps in the size of the grammar (assuming a
somewhat reasonable tape encoding of the terms).

\begin{fact}  The bounded equivalence level problem is in
\label{eq-bel}
  $\ComplexityFont{ATIME}\big(\size{E,F}+\poly(|\?G|)\cdot e\big)$.
\end{fact}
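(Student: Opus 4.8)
The plan is to spell out an alternating Turing machine that plays the bisimulation game of \cref{sub-game} up to depth $e{+}1$. The machine first copies the graph representations of $E$ and $F$ onto its tape, which costs $O(\size{E,F})$, and then simulates rounds of the game while maintaining a round counter. In an existential (Spoiler) configuration holding a pair $(t_1,t_2)$ of terms it guesses a side $i\in\{1,2\}$ and a rule $A(x_1,\dots,x_{\ar(A)})\step{a}R$ of $\ru$ whose left-hand nonterminal $A$ labels the root of $t_i$ (this choice of $i$ is unavailable when the root of $t_i$ is a variable or a nonterminal with no rule; if no choice is available on either side, Spoiler is stuck and the branch rejects); the machine then moves to a universal (Duplicator) configuration which branches over all rules of $\ru$ applicable to the root of $t_{3-i}$ and carrying the same label $a$ (if there is none, Duplicator is stuck and the branch accepts). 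The two chosen rules determine the successor pair $(t_1',t_2')$, the counter is incremented, and the branch rejects once round $e{+}1$ has been completed without Duplicator ever getting stuck. By the characterisation recalled in \cref{sub-game}---namely $\el{s_1,s_2}\le k$ iff Spoiler wins by round $k{+}1$, together with the determinacy of the game and the sufficiency of memoryless strategies---this machine accepts iff $\el{E,F}\le e$.

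The crux is that a single round costs only $\poly(|\?G|)$ time, uniformly in $e$. To achieve this the intermediate terms are never recopied: the machine keeps one directed graph that only ever grows, together with, for each of the two sides, a pointer to the root of the current term and the list $p_1,\dots,p_{\ar(A)}$ of pointers to its children. Applying a rule $A(x_1,\dots,x_{\ar(A)})\step{a}R$ then amounts to appending a fresh copy of the finite term $R$---at most $\size{R}\le|\?G|$ new nodes---at the end of the tape, redirecting each leaf of the copy labelled $x_j$ to $p_j$ and computing the children of the new root accordingly (when $R$ is a bare variable $x_j$ no node is added and the new root is simply $p_j$). Reading the two root labels, enumerating the at most $|\ru|$ candidate rules, testing the stuck-player conditions, and performing this splice all touch only the current roots, their $\le m$ recorded children, and the $O(|\?G|)$ freshly written cells, so a round takes $\poly(|\?G|)$ steps---up to the polylogarithmic overhead of addressing a tape whose length stays $O(\size{E,F}+e\cdot|\?G|)$, which is absorbed under a reasonable encoding.

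Summing along an arbitrary branch then yields $O(\size{E,F})$ for the initial copy plus $(e{+}1)\cdot\poly(|\?G|)$ for the rounds, i.e.\ the stated $\ComplexityFont{ATIME}\bigl(\size{E,F}+\poly(|\?G|)\cdot e\bigr)$ bound. I expect the only genuinely delicate point to be this ``$\poly(|\?G|)$ per round, independent of $e$'' claim: one has to verify that every move and every losing test a player makes depends solely on the root of the current term and not on the unboundedly large regular subterms hanging below it, and that the append-only encoding of the configuration behaves as described, so that rewriting at the head is a constant-size local edit of a shared graph. Once this ``somewhat reasonable tape encoding'' is pinned down, the complexity bookkeeping and the reduction to the bisimulation game are routine.
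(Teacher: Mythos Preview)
Your proposal is correct and follows exactly the approach the paper sketches in the paragraph preceding the Fact: an alternating machine that writes down the two terms and then plays at most $e{+}1$ rounds of the bisimulation game, each round costing $\poly(|\?G|)$ under a ``somewhat reasonable tape encoding.'' You have simply spelled out what that encoding is (an append-only shared graph with root pointers), which the paper leaves implicit; the argument is otherwise identical.
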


\subsection{Candidate Bases}

Consider some fixed first-order grammar $\?G=\tup{\N,\Sigma,\ru}$.
Given three numbers $n$, $s$, and~$g$ in~$\+N$---which will depend on~$\?G$---, an \emph{$(n,s,g)$-candidate basis} for
non-equivalence is a set of pairs of terms
$\?B\subseteq\terms\times\terms$ associated with two sequences of
numbers $(s_i)_{0\leq i\leq n}$ and $(e_i)_{0\leq i\leq n}$ such that
\begin{enumerate}
\item  $\?B\subseteq{\nsim}$,
\item  for each $(E,F)\in\?B$ there is $i\in\{0,\dots,n\}$ such that	
	$\vars{E,F}=\{x_1,\dots,x_i\}$ and $\size{E,F}\leq s_i$,
\item $s_n\eqdef s$, and the remaining numbers are defined inductively
by
\end{enumerate}
\begin{align}
  e_i &\eqdef\max_{(E,F)\in\?B\mid\size{E,F}\leq s_i}\el{E,F}\;,\label{eq-ei}\\
  s_{i-1}&\eqdef 2s_i+g+e_i(\sinc+g)\;.\label{eq-si}
\end{align}
Note that the numbers $(s_i)_{0\leq i\leq n}$ and $(e_i)_{0\leq i\leq
n}$ are entirely determined by~$\?B$ and $n$, $s$, and $g$.
An $(n,s,g)$-candidate basis $\?B$ yields a \emph{bound}
$\?E_\?B$ defined by
\begin{equation}\label{eq-EB}
  \?E_\?B\eqdef n+1+\sum_{i=0}^n e_i\;.
\end{equation}

\paragraph{Full Bases}For $0\leq i\leq n$, let
\begin{multline}\label{eq-pairs}
  \pairs\eqdef\{(E,F)\mid\exists j\leq i\mathbin.\vars{E,F}=\{x_1,\dots,x_j\}\ifams\relax\else\\\fi\wedge
\size{E,F}\leq s_i\}\;.\end{multline} 
An $(n,s,g)$-candidate basis~$\?B$ is \emph{full below} some
equivalence level $e\in\omega+1$ if, for all $0\leq i\leq n$ and all
$(E,F)\in\pairs$ such that $\el{E,F}<e$ we have $(E,F)\in\?B$.
We
say that~$\?B$ is \emph{full} if it is full below~$\omega$.  In other
words and because $\?B\subseteq{\nsim}$, $\?B$ is full if and only
if, for all $0\leq i\leq n$, $\pairs\setminus\?B\subseteq{\sim}$.
\begin{proposition}[{\citep[Prop.~9]{jancar18}}]
  For any $n,s,g$, there is a unique full $(n,s,g)$-candidate basis,
  denoted by $\?B_{n,s,g}$.
\end{proposition}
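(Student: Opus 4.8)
The statement to prove is that for any triple $n,s,g$ there exists a unique $(n,s,g)$-candidate basis that is full, i.e.\ full below~$\omega$. The plan is to construct $\?B_{n,s,g}$ explicitly as the set of all pairs from $\bigcup_{0\leq i\leq n}\pairs$ that are genuinely non-equivalent, and then argue that this set, together with the numbers $(s_i)_i$ and $(e_i)_i$ it induces, satisfies all three clauses of the definition of a candidate basis. Uniqueness will then follow because any full candidate basis must contain exactly these pairs and no others.

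First I would observe that the numbers $(s_i)_{0\leq i\leq n}$ do \emph{not} depend on the chosen basis after all --- looking at the recurrences \eqref{eq-ei} and \eqref{eq-si}, the value $s_{i-1}$ depends on $s_i$ and on $e_i$, and $e_i$ depends on the set $\{(E,F)\in\?B\mid\size{E,F}\leq s_i\}$. The subtlety is that this set depends on $\?B$, so one cannot immediately compute the $s_i$'s in isolation. The right way to untangle this is to define everything simultaneously by downward induction on $i$ from $n$ to $0$: set $s_n\eqdef s$; having $s_i$, let $\?B_i\eqdef\{(E,F)\in\pairs\mid E\nsim F\}$ be the non-equivalent pairs with variable set an initial segment $\{x_1,\dots,x_j\}$ for some $j\leq i$ and $\size{E,F}\leq s_i$; then put $e_i\eqdef\max_{(E,F)\in\?B_i}\el{E,F}$ (finite, since $\nsim$ means $\el{E,F}\in\+N$, and there are only finitely many regular terms of bounded size up to renaming, so the max is over a finite set --- this finiteness point deserves an explicit sentence), and finally $s_{i-1}\eqdef 2s_i+g+e_i(\sinc+g)$ as prescribed. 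Then I would set $\?B_{n,s,g}\eqdef\bigcup_{0\leq i\leq n}\?B_i$, which by construction equals $\{(E,F)\mid\exists i\leq n, (E,F)\in\pairs, E\nsim F\}$, i.e.\ all non-equivalent pairs appearing in any $\pairs$.

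Next I would verify the three defining clauses for this $\?B_{n,s,g}$. Clause~(1), $\?B\subseteq{\nsim}$, is immediate from the construction. Clause~(2) holds because every $(E,F)\in\?B_{n,s,g}$ lies in some $\pairs$, hence has $\vars{E,F}=\{x_1,\dots,x_j\}$ for some $j\leq i\leq n$ and $\size{E,F}\leq s_i$; a small point to check is that this is compatible with the indexing in clause~(2), which asks for an $i$ with $\vars{E,F}=\{x_1,\dots,x_i\}$ and $\size{E,F}\leq s_i$ --- taking $i\eqdef j$ works provided $\size{E,F}\leq s_j$, which follows since $j\leq i$ and the sequence $(s_i)_i$ is non-increasing in~$i$ (indeed $s_{i-1}=2s_i+g+e_i(\sinc+g)\geq s_i$), so $s_j\geq s_i\geq\size{E,F}$. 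Clause~(3) is how we defined the numbers. Finally, fullness: $\?B_{n,s,g}$ is full below $\omega$ by construction, since it contains \emph{every} pair in any $\pairs$ with $\el{E,F}<\omega$. For uniqueness, suppose $\?B'$ is any full $(n,s,g)$-candidate basis with associated sequences $(s_i')_i$, $(e_i')_i$; I would prove by downward induction on $i$ that $s_i'=s_i$, $e_i'=e_i$, and $\?B'\cap\pairs=\?B_i$, using that $s_n'=s=s_n$ and that, once $s_i'=s_i$ is known, fullness forces $\?B'\cap\pairs$ to be exactly the non-equivalent pairs of $\pairs$ (no more, by clause~(1); no fewer, by fullness), hence $e_i'=e_i$ and then $s_{i-1}'=s_{i-1}$ by the recurrence.

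The only genuinely delicate step is the well-definedness of the downward induction, namely that $e_i$ is finite so that $s_{i-1}$ is a genuine natural number and the induction can continue. This rests on two facts: first, that every pair in $\?B_i\subseteq{\nsim}$ has $\el{E,F}\in\+N$ (by definition of $\nsim$ and \eqref{eq-el}); and second, that up to the bijection between terms and their least graph representations there are only finitely many regular terms of size at most $s_i$ over the fixed finite alphabet $\N$ and variables $x_1,\dots,x_i$, so $\?B_i$ is finite and the supremum in \eqref{eq-ei} is a maximum over a finite set of natural numbers, hence finite. Everything else is bookkeeping about the monotonicity of $(s_i)_i$ and matching the two slightly different index conventions in clause~(2) and in \eqref{eq-pairs}.
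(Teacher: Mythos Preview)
Your proof follows the same strategy as the paper's: construct $\?B_{n,s,g}$ by downward induction on the index (the paper phrases this as induction on~$n$, which unrolls to exactly your computation of the $s_i$, $\?B_i$, $e_i$ in turn), and your treatment of the finiteness of each~$\?B_i$ and of uniqueness is more explicit than the paper's sketch.

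There is, however, one point you gloss over---and so does the paper's sketch. Equation~\eqref{eq-ei} defines $e_i$ as the maximum of $\el{E,F}$ over \emph{all} pairs $(E,F)\in\?B$ with $\size{E,F}\leq s_i$, not just over your $\?B_i=\pairs\cap{\nsim}$. Because $s_k\leq s_i$ for every $k>i$, each pair in $\?B_k$---including those using strictly more than~$i$ variables---also satisfies $\size{E,F}\leq s_i$ and hence contributes to the maximum in~\eqref{eq-ei}. So ``Clause~(3) is how we defined the numbers'' is not immediate: you would still need to argue that $\max_{(E,F)\in\?B_i}\el{E,F}$ coincides with $\max_{(E,F)\in\?B^*,\,\size{E,F}\leq s_i}\el{E,F}$. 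The same conflation appears in your uniqueness argument, where ``hence $e_i'=e_i$'' jumps from $\?B'\cap\pairs=\?B_i$ to a statement about the set $\{(E,F)\in\?B'\mid\size{E,F}\leq s_i\}$, which is in general larger. If the two maxima differ at some level $i<n$, then your $s_{i-1}$ is too small relative to what \eqref{eq-ei}--\eqref{eq-si} actually assign to~$\?B^*$, and $\?B^*$ need not be full with respect to those values.
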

\begin{proof}
  The full candidate basis~$\?B_{n,s,g}$ is constructed by induction
  over~$n$.  Let $s_n\eqdef s$ and consider the finite set
  $S_n\eqdef\{(E,F)\in\terms\times\terms\mid E\nsim
  F\wedge\exists j\leq n\mathbin.\vars{E,F}=\{x_1,\dots,x_j\}\wedge\size{E,F}\leq
  s_n\}$;
  $S_n$ has a maximal equivalence level
  $e_n\eqdef\max_{(E,F)\in S_n}\el{E,F}$. 
  If $n=0$, we define
  $\?B_{0,s,g}\eqdef S_0$.  Otherwise, we let $s_{n-1}\eqdef
  2s_n+g+e_n(\sinc+g)$ as in~\eqref{eq-si}; by induction hypothesis
  there is a unique full $(n-1,s_{n-1},g)$-candidate basis
  $\?B_{n-1,s_{n-1},g}$ and we set $\?B_{n,s,g}\eqdef
  S_n\cup\?B_{n-1,s_{n-1},g}$.
\end{proof}

The main result from~\citep{jancar18} can now be stated.
\begin{theorem}[{\citep[Thm.~7]{jancar18}}]\label{th-el}
  Let $\?G=\tup{\N,\Sigma,\ru}$ be a first-order grammar.  Then one
  can compute a grammatical constant~$g$ exponential in~$|\?G|$ and
  grammatical constants $n$, $s$, and $c$ doubly exponential
  in~$|\?G|$ such that, for all terms $E,F$ in $\terms$ with
  $E\nsim F$, \begin{equation*} \el{E,F}\leq
  c\cdot\big(\?E_{\?B_{n,s,g}}\cdot\size{E,F}+\size{E,F}^2\big)\;.  \end{equation*}
\end{theorem}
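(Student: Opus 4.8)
The plan is to read the statement as a bound on how quickly Spoiler can win the bisimulation game from $(E,F)$ (recall from \cref{sub-game} that $\el{E,F}\le k$ iff Spoiler wins by round $k{+}1$), and to reduce the whole estimate to a single \emph{decomposition lemma}: every non-equivalent pair of terms can be peeled, one ``shallow layer'' at a time, until a pair essentially recorded in the full basis $\?B_{n,s,g}$ is exposed, with each peeling step affecting the equivalence level in a controlled way.

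The first step is to nail down the two congruence directions for equivalence levels. The inequality $\el{E,F}\le\el{E\sigma,F\sigma}$ is already in place; in the other direction one proves that $\el{E\sigma,F\sigma}$ is bounded by $\el{E,F}$ plus a term of order $\size{E,F}\cdot(\sinc+g)$ plus the relevant equivalence levels of the substituted subterms — intuitively Spoiler first ``uses up'' the shallow part $(E,F)$ in $O(\size{E,F}\cdot(\sinc+g))$ rounds, after which the game descends into one of the substituted subterms. The constant $g$, exponential in $|\?G|$ via $d_0$ and the sink words $w_{[A,i]}$, is exactly what is needed here, and this is why the recurrence $s_{i-1}\eqdef 2s_i+g+e_i(\sinc+g)$ from \eqref{eq-si} has the shape it does: a non-equivalent pair of shallow size $\le s_i$ at layer $i$ decomposes into a shallow part of size $\le 2s_i+g+\dots$ plus subterm pairs that still fit the budget $s_{i-1}$ one layer down.

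The second step is the decomposition itself. Using $n\eqdef m^{d_0}$ and the property recalled just after \eqref{eq-n}, one writes $E=E'\rho$ and $F=F'\rho$ with $\height{E'},\height{F'}\le d_0$, all their variables among $x_1,\dots,x_n$, and $\rho$ supplying the depth-$d_0$ subterms, so that the first $\le d_0$ rounds from $(E,F)$ are simulated on the shallow parts: either $(E',F')$ is non-equivalent as an open pair, in which case (after renaming variables to $\{x_1,\dots,x_j\}$, $j\le n$) it lies in $\pairs[n]$, hence in $\?B_{n,s,g}$ by fullness, hence $\el{E',F'}\le e_n\le\?E_{\?B_{n,s,g}}$; or the play has descended to a pair of depth-$d_0$ subterms $(U,V)$ with $U\nsim V$, on which the argument repeats. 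Since each genuinely new descent lands on a previously unseen subterm and the equivalence level strictly decreases along a Spoiler-optimal play, at most $O(\size{E,F})$ descent phases occur before a basis pair is reached (periodic parts being folded into the basis rather than descended through). Each phase contributes either a basis equivalence level $\le\max_i e_i\le\?E_{\?B_{n,s,g}}$ or an $O(\size{E,F}\cdot(\sinc+g))$ shallow-consumption cost, and since the size of the current pair grows at most linearly with the number of rounds played, a mildly self-referential estimate then resolves to a bound of the announced form $c\cdot(\?E_{\?B_{n,s,g}}\cdot\size{E,F}+\size{E,F}^2)$, with $n$, $s$, $c$ doubly exponential in $|\?G|$ because $d_0$ is already singly exponential and $n=m^{d_0}$.

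The main obstacle is the upper, ``expanding'' congruence estimate on $\el{E\sigma,F\sigma}$: a deep and possibly periodic term can let Duplicator stall, and one must show this stalling is paid for within $d_0$ rounds per layer and $\sinc+g$ per unit of shallow size rather than growing unboundedly; equivalently, that Spoiler can always navigate down to a non-equivalent subterm pair without the intermediate pair sizes escaping the budgets $s_i$. The two bookkeeping points needing the most care are ensuring the exposed open pairs have variable set exactly $\{x_1,\dots,x_j\}$ with $j\le n$ (so that the definition of $\pairs$ applies) and ensuring the simultaneous strict decrease of the equivalence level, which is what makes the descent — hence the whole estimate — terminate.
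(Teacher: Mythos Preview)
The paper does not prove this theorem; it is imported wholesale as \citep[Thm.~7]{jancar18}, with only the associated grammatical constants tabulated in \cref{tab-cst} and the auxiliary \cref{cl-el} quoted (also without proof) for later use in \cref{th-algo}. There is therefore no proof in the present paper against which your sketch can be compared.

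As a standalone outline your sketch is broadly plausible---decompose into a shallow head plus a substitution using the depth-$d_0$ presentation behind~\eqref{eq-n}, use fullness of $\?B_{n,s,g}$ to cap the shallow contribution by some~$e_i$, and iterate over at most $\size{E,F}$ many subterm descents---and it correctly identifies the ``expanding'' congruence bound on $\el{E\sigma,F\sigma}$ as the crux. However, the appendix table reveals that the actual argument in~\citep{jancar18} threads through several further intermediate constants $d_1,\dots,d_5$ (governing, e.g., the number and length of ``balancing'' plays and the interaction between sink words and height increase), none of which your sketch accounts for; in particular your informal ``$O(\size{E,F}\cdot(\sinc+g))$ shallow-consumption cost'' and the claim that ``periodic parts [are] folded into the basis'' paper over exactly the places where those constants do real work. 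A verification of your sketch would thus require going to~\citep{jancar18} rather than to anything in the present paper.
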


\Cref{th-el} therefore shows that the bisimulation problem can be
reduced to the bounded equivalence level problem, provided one can
compute the full $(n,s,g)$-candidate basis for suitable $n$, $s$,
and~$g$---see \cref{tab-cst} in the appendix for details on how the
grammatical constants $n$, $s$, $c$, and~$g$ are defined
in~\citep{jancar18}.  Our goal in \cref{sec-algo} will thus be to
exhibit a concrete algorithm computing the full candidate
basis~$\?B_{n,s,g}$, in order to derive an upper bound
on~$\?E_{\?B_{n,s,g}}$.

The proof of \citep[Thm.~7]{jancar18} relies on the following insight,
which we will also need in order to prove the correctness of our algorithm.
\begin{lemma}[{\citep[Eq.~39]{jancar18}}]\label{cl-el}
  Let $\?G=\tup{\N,\Sigma,\ru}$ be a first-order grammar, $g,n,s,c$ be
  defined as in \cref{th-el}, $E,F$ be two terms in~$\terms$ with $E\not\sim F$,
  and $\?B$ be an $(n,s,g)$-candidate basis full below~$\el{E,F}$.
  Then
  \begin{equation*}
    \el{E,F}\leq c\cdot\big(\?E_{\?B}\cdot\size{E,F}+\size{E,F}^2\big)\;.
  \end{equation*}
\end{lemma}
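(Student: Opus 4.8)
The plan is to recast the statement in terms of the bisimulation game of \cref{sub-game}: writing $\ell\eqdef\el{E,F}$ and $N\eqdef\size{E,F}$, and using that $\el{s,t}\le k$ iff Spoiler wins by round $k+1$, it suffices to exhibit a Spoiler strategy from $(E,F)$ that wins within $c\bigl(\?E_\?B\cdot N+N^2\bigr)$ rounds. The candidate basis $\?B$ will only ever be consulted on pairs whose equivalence level is strictly below $\ell$, so being \emph{full below} $\el{E,F}$ rather than full outright is all the argument needs; this is the sole point on which the lemma improves on \cref{th-el}, whose proof is being reorganised here.

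First I would exploit the grammatical constant $n$ of \eqref{eq-n}. Since $E$ and $F$ together have at most $N$ distinct subterms, they have at most $n$ distinct depth-$d_0$ subterms in total, so every subterm $P$ of $E$ or of $F$ factors as $P=\widehat P\,\sigma$, where $\widehat P$ is a finite term with $\height{\widehat P}\le d_0$ and $\vars{\widehat P}\subseteq\{x_1,\dots,x_n\}$, and $\sigma$ is one fixed substitution sending $x_1,\dots,x_n$ to those frontier subterms. The gain is that every top-part $\widehat P$ is \emph{small} --- its size is bounded in terms of $m$ and $d_0$ alone --- so that, after renaming its variables to an initial segment, any non-equivalent pair of top-parts lies in $\pairs$ for some $i\le n$, and is therefore governed by $\?B$ as soon as its equivalence level drops below $\ell$.

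The core is then a descent along the variable index from $n$ down to $0$, driven by the quantitative congruence (``balancing'') properties of $\sim$ established in \citep{jancar18}. At a generic stage, Spoiler holds a non-equivalent pair of subterms $(P,Q)$ of the current configuration together with a presentation $P=G\sigma$, $Q=H\sigma$ with $\vars{G,H}\subseteq\{x_1,\dots,x_i\}$ and $\size{G,H}\le s_i$, and necessarily $G\not\sim H$ (else $P=G\sigma\sim H\sigma=Q$ by congruence). From $\el{G,H}\le\el{G\sigma,H\sigma}<\ell$ and $(G,H)\in\pairs$, fullness of $\?B$ below $\ell$ yields $(G,H)\in\?B$ and hence $\el{G,H}\le e_i$. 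Spoiler then plays a winning strategy of the small game on $(G,H)$, lifted through $\sigma$ to the real configuration; within $e_i+1$ rounds this either hits a visible structural mismatch (and Spoiler wins) or exposes, at a frontier variable, a fresh non-equivalent pair of subterms whose presentation descends to index $i-1$ and whose size is within the next budget --- and the slack in $s_{i-1}\eqdef 2s_i+g+e_i(\sinc+g)$ from \eqref{eq-si} is precisely what the defect of the congruence over those $e_i$ rounds costs, each round increasing size by at most $\sinc$ on top of the additive slack $g$. The recurrence \eqref{eq-si} thus bottoms out after the $n+1$ index levels; threading it through the at most $N$ peeling steps the spine of $E,F$ can support, and checking that the basis consultations telescope to $\sum_i e_i$ across the $n+1$ levels rather than summing naively, produces a total of $c\bigl(\?E_\?B\cdot N+N^2\bigr)$ rounds once $c$ is taken large enough to absorb the per-round polynomial overheads ($d_0$, $\sinc$, $m$, $g$, all covered by the constants of \cref{th-el}).

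The delicate point, and the main obstacle, is making the bookkeeping of this descent airtight: one must show that each peeling really does produce a pair fitting the next size budget $s_{i-1}$ with a strictly shorter variable prefix, that the equivalence-level cost of the $e_i$ navigation rounds at each level is genuinely additive and telescopes to $\sum_i e_i$ rather than compounding multiplicatively, and that the lifting of the small $(G,H)$-strategy through $\sigma$ remains faithful up to the first frontier variable encountered. This is exactly the content of the congruence/balancing lemmas underpinning \citep[Thm.~7]{jancar18}; re-deriving them in full is where the real effort lies, the remainder being a careful accounting of the grammatical constants.
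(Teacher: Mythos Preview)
The paper does not contain a proof of this lemma at all: it is stated with the citation \citep[Eq.~39]{jancar18} and used as a black box imported from that reference, so there is no ``paper's own proof'' to compare your attempt against.

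Your sketch is a plausible high-level outline of the argument in~\citep{jancar18}, and you correctly identify that the real content lies in the congruence and balancing lemmas there, which you do not reprove. A few imprecisions are worth flagging. First, the bound~$n=m^{d_0}$ counts positions at depth~$d_0$ in a \emph{single} term, not distinct depth-$d_0$ subterms of the pair $(E,F)$ ``in total''; the factorisation $P=\widehat P\sigma$ works term by term. Second, your account of where the linear term $\?E_\?B\cdot N$ and the quadratic term~$N^2$ come from (``threading through at most~$N$ peeling steps the spine of $E,F$ can support'') does not match the actual mechanism in~\citep{jancar18}: there the quadratic contribution arises from a separate phase of the argument, not from iterating the $n$-level descent $N$ times. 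Third, the descent need not drop the variable index by exactly one at each stage; what is guaranteed is only a weak decrease, and the telescoping to $\sum_i e_i$ requires the more careful bookkeeping you allude to but do not carry out. As a pointer to the source your proposal is fine, but it is not a self-contained proof, and since the present paper does not attempt one either, there is nothing further to compare.
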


\section{Computing Candidate Bases}
\label{sec-algo}
\Cref{th-el} shows that, in order to solve the bisimulation problem,
it suffices to compute~$c$ and~$\?E_{\?B_{n,s,g}}$ and then solve the
bounded equivalence problem, for which \cref{eq-bel} provides a
complexity upper bound.  In this section, we show how to compute
$\?E_{\?B_{n,s,g}}$ for an input first-order grammar
$\?G=\tup{\N,\Sigma,\ru}$.  Note that this grammatical constant was
shown computable in~\cite{jancar14,jancar18} through a brute-force
argument, but here we want a concrete algorithm, whose complexity will
be analysed in \cref{sec-upb}.  We proceed in two steps, by first
considering a non effective version\ifams\relax\else\ of the
algorithm\fi\ in \cref{sub-neff}, whose correctness
is straightforward, and then the
actual algorithm in \cref{sub-eff}.

\subsection{Non Effective Version}\label{sub-neff}

Throughout this section, we consider~$n$ as a fixed parameter.  We
first assume that we have an oracle
$\textsc{EqLevel}(\?G,\?E_{\?B},c,E,F)$ at our disposal,
that returns the equivalence
level $\el{E,F}$ in~$\lts$; the parameters $\?E_{\?B},c$ will be used
in the effective version in \cref{sub-eff}.  The following procedure
then constructs  full $(n,s,g)$-candidate basis $\?B_{n,s,g}$ and its associated
bound~$\?E_{\?B_{n,s,g}}$, by progressively adding pairs from the
sets~$\pairs$ until the candidate basis is full.  In order not to
clutter the presentation too much, we assume implicitly that the
equivalence level $e$~of each pair $(E,F)$ added to~$\?B$ on
line~\ref{al-b-up} is implicitly stored, thus it does not need to be
recomputed on line~\ref{al-ei-up}.

\medskip

\begin{algorithmic}[1]%
\Procedure {CandidateBound$_n$}{$\?G$, $s$, $g$, $c$}
\State $\?B\leftarrow\emptyset$\Comment{Initialisation}\label{al-ini-start}
\For {$i\leftarrow 0,\dots,n$}
  \State $e_i\leftarrow 0$
\EndFor
\State $s_n\leftarrow s$
\For {$i\leftarrow n-1,\dots,0$}
  \State $s_i\leftarrow 2s_{i+1}+g$
\EndFor
\State $\?E_{\?B}\leftarrow n+1$\label{al-ini-end}

\For {$i\leftarrow n,\dots,0$}\label{al-pi-start}%
  \State
  $\?P_i\leftarrow\pairs\setminus\bigcup_{i<j\leq n}\?P_j$\label{al-pi-end}
\EndFor
\While{$\exists i\in\{0,1,\dots,n\}\mathbin,\exists (E,F)\in\?P_i:$
\ifams\relax\else\\
\hspace*{1.8em}\fi
$\textsc{EqLevel}(\?G,\?E_{\?B},c,E,F)<\omega$}\ifams\relax\else\Comment{Main loop}\fi
\label{al-ml-start}
  \State
  $e\leftarrow\textsc{EqLevel}(\?G,\?E_{\?B},c,E,F)$\label{al-el-inc}\ifams\Comment{Main loop}\fi
  \State $\?P_i\leftarrow \?P_i\setminus\{(E,F)\}$\label{al-pi-dec}
  \State $\?B\leftarrow \?B\cup\{(E,F)\}$\label{al-b-up}
  \If{$e > e_i$}\Comment{If so, then update}\label{al-up-start}
     \State $e_i\leftarrow e$\label{al-up-ei}
     \For{$j\leftarrow i-1,\dots,0$}\label{al-up-for}
       \State $s_j\leftarrow 2s_{j+1}+g+e_{j+1}(\sinc+g)$\label{al-up-sj}
       \State $e_j\leftarrow\max_{(E,F)\in\?B\mid\size{E,F}\leq
  s_j}\el{E,F}$\label{al-ei-up}
       \State $\?P_j\leftarrow \pairs[j]\setminus(\?B\cup\bigcup_{i<k\leq n}\?P_k)$\label{al-pi-up}
     \EndFor
     \State $\?E_{\?B}\leftarrow n+1+\sum_{0\leq j\leq n}e_j$\label{al-up-end}
  \EndIf
\EndWhile
\State\Return $\?E_{\?B}$
\EndProcedure
\end{algorithmic}

\paragraph{Invariant}

The procedure $\textsc{CandidateBound}_n$ maintains as an invariant of
its main loop on lines~\ref{al-ml-start}--\ref{al-up-end} that $\?B$
is an $(n,s,g)$-candidate basis associated with the
numbers~$(s_i)_{0\leq i\leq n}$ and~$(e_i)_{0\leq i\leq n}$, and that
$\?E_{\?B}$ is its associated bound.  This holds indeed after the
initialisation phase on lines~\ref{al-ini-start}--\ref{al-ini-end},
and is then enforced in the main loop by the update instructions on
lines~\ref{al-up-start}--\ref{al-up-end}.

\paragraph{Correctness}

Let us check that, if it terminates, this non effective version does
indeed return the bound~$\?E_{\?B_{n,s,g}}$ associated with the unique
full $(n,s,g)$-candidate basis~$\?B_{n,s,g}$.  By the previous
invariant, it suffices to show that~$\?B$ is full when the procedure
terminates.  Consider for this some index $0\leq i\leq n$ and a pair
$(E,F)\in\pairs$ with $\el{E,F}=e$ for some $e<\omega$.  By definition
of the sets $(\?P_i)_{0\leq i\leq n}$ on
lines~\ref{al-pi-start}--\ref{al-pi-end} and their updates on
lines~\ref{al-pi-dec} and~\ref{al-pi-up} in the main loop, the pair
$(E,F)$ must have been added to some~$\?P_j$ for $j\geq i$.  Then the
pair must have been selected by the condition of the main loop on
line~\ref{al-ml-start}, and added to~$\?B$.

\paragraph{Termination}

Although we are still considering a non effective version of the
algorithm, the proof that it always terminates is the same as the one
for the effective version in \cref{sub-eff}.  We exhibit a ranking
function on the main loop, thereby showing that it must stop
eventually.  More precisely, each time we enter the main loop on
line~\ref{al-ml-start}, we associate to the current state of the
procedure the ordinal rank below $\omega^{n+1}$ defined by\footnote{
Note that this is equivalent to defining the rank as the tuple
$\tup{|\?P_n|,\dots,|\?P_0|}$ in~$\+N^{n+1}$, ordered
lexicographically, but ordinal notations are more convenient for our
analysis in \cref{sec-upb}.}
\begin{equation}\label{eq-rank}
  \alpha\eqdef\omega^n\cdot|\?P_n|+\cdots+\omega^0\cdot|\?P_0|\;.
\end{equation}
This defines a descending sequence of ordinals
\begin{equation}\label{eq-ranks}
  \alpha_0>\alpha_1>\cdots
\end{equation}
of ordinals, where $\alpha_\ell$ is the rank after $\ell$~iterations of the
main loop.
Indeed, each time we enter the loop, the cardinal
$|\?P_i|$ of the set under consideration strictly decreases on
line~\ref{al-pi-dec}, and is not modified by the updates on
line~\ref{al-pi-up}, which only touch the sets~$\?P_j$ for $j<i$.
Hence $\textsc{CandidateBound}_n$ terminates.

\subsection{Effective Version}\label{sub-eff}

In order to render $\textsc{CandidateBound}_n$ effective, we provide
an implementation of $\textsc{EqLevel}$ that does not require an
oracle for the bisimulation problem, but relies instead
on \cref{cl-el} and the bounded equivalence level problem, which as we
saw in \cref{sub-game} is decidable.
\begin{algorithmic}[1]%
\Procedure{EqLevel}{$\?G$, $\?E_\?B$, $c$, $E$, $F$}
\If{$\el{E,F}\leq
c\cdot\big(\?E_\?B\cdot\size{E,F}+\size{E,F}^2\big)$}
  \State\Return $\el{E,F}$
  \Else
  \State\Return $\omega$
\EndIf
\EndProcedure
\end{algorithmic}

We establish the correctness of this effective variant in the
following theorem, which uses the same reasoning as the proof
of~\citep[Thm.~7]{jancar18}.
\begin{theorem}\label{th-algo}
  The effective version of procedure
  $\textsc{CandidateBound}_n(\?G,s,g,c)$ terminates and, provided $n$,
  $s$, $c$, and $g$ are defined as in \cref{th-el}, returns the
  bound~$\?E_{\?B_{n,s,g}}$.
\end{theorem}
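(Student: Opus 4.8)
The statement has two parts: termination of the effective version, and correctness (that it returns $\?E_{\?B_{n,s,g}}$). My plan is to reduce both to what has already been established for the non-effective version in \cref{sub-neff}, the key point being that the effective $\textsc{EqLevel}$ agrees with its oracular counterpart \emph{on all calls that actually occur during the run}, even though it may disagree in general. Concretely, I would proceed by establishing, by induction on the number of iterations of the main loop, the loop invariant already stated in \cref{sub-neff}: at the head of the main loop, $\?B$ is an $(n,s,g)$-candidate basis associated with the numbers $(s_i)_i$ and $(e_i)_i$, and $\?E_{\?B}$ equals its associated bound $n+1+\sum_i e_i$. The base case is the initialisation block, lines~\ref{al-ini-start}--\ref{al-ini-end}, exactly as before.

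For the inductive step I would show that whenever $\textsc{EqLevel}(\?G,\?E_{\?B},c,E,F)$ is called inside the loop (on lines~\ref{al-ml-start} and~\ref{al-el-inc}), the pair $(E,F)$ lies in $\pairs$ for the relevant $i$, and $\?B$ is at that moment a candidate basis full below $\el{E,F}$. Fullness below $\el{E,F}$ comes from the structure of the sets $\?P_j$: by the same reasoning as in the "Correctness" paragraph of \cref{sub-neff}, every pair of $\pairs$ with smaller equivalence level that could still be pending has been placed in some $\?P_j$ with $j\ge i$ and hence, since the main-loop condition keeps firing, must already have been moved into $\?B$ before $(E,F)$ is examined --- here one uses that the loop processes all currently-violating pairs, so no pair of strictly smaller equivalence level can still be outstanding when $(E,F)$ is picked. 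Then \cref{cl-el} applies with this $\?B$: if $E\not\sim F$ then $\el{E,F}\le c\cdot(\?E_{\?B}\cdot\size{E,F}+\size{E,F}^2)$, so the test in $\textsc{EqLevel}$ succeeds and the true finite value $\el{E,F}$ is returned; and if $E\sim F$ then $\el{E,F}=\omega$, the test fails, and $\omega$ is returned. In both cases the effective $\textsc{EqLevel}$ returns the same value as the oracle would, so the run is identical to a run of the non-effective version, and the invariant is preserved by lines~\ref{al-pi-dec}--\ref{al-up-end} exactly as argued there.

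Once this equivalence of runs is in place, the two conclusions are immediate. Termination follows from the ranking-function argument of \cref{sub-neff}: the ordinal $\alpha=\omega^n|\?P_n|+\cdots+\omega^0|\?P_0|$ strictly decreases at each iteration (line~\ref{al-pi-dec} decreases some $|\?P_i|$ while lines~\ref{al-pi-up} only touch $\?P_j$ with $j<i$), and there is no infinite descending sequence of ordinals below $\omega^{n+1}$. For correctness, upon termination the main-loop condition fails, so for every $i$ and every $(E,F)\in\?P_i$ we have $\textsc{EqLevel}(\?G,\?E_{\?B},c,E,F)=\omega$, i.e.\ $E\sim F$; combined with the bookkeeping of the $\?P_i$ this gives $\pairs\setminus\?B\subseteq{\sim}$ for all $i$, which by the characterisation in \cref{sub-neff} means $\?B$ is full. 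By the invariant $\?B$ is a candidate basis, and by uniqueness (\citep[Prop.~9]{jancar18}) $\?B=\?B_{n,s,g}$, whence the returned value $\?E_{\?B}=\?E_{\?B_{n,s,g}}$.

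The delicate point --- the step I expect to carry the real weight --- is the claim that $\?B$ is full below $\el{E,F}$ \emph{at the moment $\textsc{EqLevel}$ is invoked on $(E,F)$}. This is what licenses the use of \cref{cl-el} rather than the full-strength \cref{th-el}, and it is exactly the subtlety that makes the effective procedure sound despite $\textsc{EqLevel}$ being unsound in isolation: one must argue carefully that the order in which pending pairs are drained from the $\?P_i$'s guarantees that no not-yet-inserted pair has equivalence level strictly below that of the pair currently under consideration. I would spell this out by observing that if some $(E',F')\in\pairs[j]$ with $\el{E',F'}<\el{E,F}<\omega$ were still outside $\?B$, it would be sitting in some $\?P_k$ ($k\ge j$) and would itself satisfy the main-loop condition, so the loop would not have terminated and, more to the point, the candidate-basis data at the time $(E,F)$ is picked would already be the "converged" one for all levels below $\el{E,F}$ --- this is precisely the content of the fixpoint reasoning underlying \citep[Thm.~7]{jancar18}, which we are reusing.
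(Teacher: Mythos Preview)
Your plan has a genuine gap at precisely the point you flag as delicate. The claim that $\?B$ is full below $\el{E,F}$ whenever $\textsc{EqLevel}$ is invoked on $(E,F)$ is not justified, and the argument you sketch for it is circular. The main loop on line~\ref{al-ml-start} is nondeterministic: it selects \emph{some} pair for which the effective $\textsc{EqLevel}$ returns a finite value, not one of minimal equivalence level. Already at the first iteration, with $\?B=\emptyset$ and $\?E_\?B=n+1$, the loop may pick a pair $(E,F)$ with, say, $\el{E,F}=50$ while another pair $(E',F')\in\?P_k$ with $\el{E',F'}=10$ sits unprocessed. Your sentence that $(E',F')$ ``would itself satisfy the main-loop condition, so \ldots\ must already have been moved into~$\?B$'' confuses being eligible with having been chosen: both pairs may satisfy the condition, but only one is processed per iteration. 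Consequently the effective $\textsc{EqLevel}$ can return~$\omega$ on non-bisimilar pairs during the run (whenever their true level exceeds the current bound), so the effective run is \emph{not} step-for-step identical to an oracular run, and your final inference ``$\textsc{EqLevel}$ returns~$\omega$, i.e.\ $E\sim F$'' at termination is the very thing that needs proving.

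The paper's proof sidesteps this by reasoning only about the terminal state. The invariant and the termination argument are as you say, and those parts of your plan are fine: whenever $\textsc{EqLevel}$ returns a finite value it is the true level, so the candidate-basis invariant is preserved. At termination, all one knows is that every $(E,F)\in\pairs\setminus\?B$ satisfies $\el{E,F}>c\cdot(\?E_\?B\cdot\size{E,F}+\size{E,F}^2)$; the paper calls such a~$\?B$ \emph{complete}. The missing step is then to show that a complete candidate basis is full, and this is done by a \emph{single} application of \cref{cl-el} to a minimal counterexample: assume some $E\nsim F$ has level exceeding its bound, and take $\el{E,F}$ minimal among all such pairs; by minimality and completeness, every $(E',F')\in\pairs$ with $\el{E',F'}<\el{E,F}$ already lies in~$\?B$, so $\?B$ \emph{is} full below $\el{E,F}$, whence \cref{cl-el} gives $\el{E,F}\leq c\cdot(\?E_\?B\cdot\size{E,F}+\size{E,F}^2)$, a contradiction. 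The point is that \cref{cl-el} is invoked once, at the end, on a carefully chosen pair, rather than at every step of the run.
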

\begin{proof}
  Termination is guaranteed by the ranking function defined
  by~\eqref{eq-rank}.  Regarding correctness, assume the provided $g$,
  $n$, $s$, and $c$ are defined as in \cref{th-el}, and let us define
  a (reflexive and symmetric) relation $\dot\sim_k$ on $\terms$ by
  $E\mathrel{\dot\sim_k} F$ if and only if $\el{E,F} >
  c\cdot\big(k\cdot\size{E,F}+\size{E,F}^2\big)$.  Clearly,
  ${\sim}\subseteq{\dot\sim_k}$ for all $k$~in $\+N$.  We say that an
  $(n,s,g)$-candidate basis is \emph{$k$-complete} if, for all $0\leq
  i\leq n$, $\pairs\setminus\?B\subseteq{\dot\sim_k}$.  We
  call~$\?B$ \emph{complete} if it is $\?E_{\?B}$-complete.  By the
  reasoning we used for showing the correctness of the non effective
  version, when the effective version of $\textsc{CandidateBound}_n$
  terminates, $\?B$ is complete.

  It remains to show that $\?B$ is complete if and only if it is full.
  First observe that, if $\?B$ is full, then it is complete: indeed,
  $\?B$ being full entails that,
  for all $E\nsim F$ in~$\pairs$, $(E,F)$ is
  in~$\?B\subseteq{\nsim}$, hence
  $\pairs\setminus\?B\subseteq{\sim}\subseteq{\dot\sim_{\?E_\?B}}$.
  
  Conversely, assume that~$\?B$ is complete, and let us show that it
  is full; it suffices to show that, in that case,
  ${\dot\sim_{\?E_\?B}}\subseteq{\sim}$.  By contradiction, consider a
  pair $E\nsim F$ with $E\mathrel{\dot\sim_{\?E_\?B}}F$; without loss
  of generality, $\el{E,F}$ can be assumed minimal among all such pairs.
  Then $\?B$ is full below~$\el{E,F}$: indeed, if $(E',F')\in\pairs$ and
  $\el{E',F'}<\el{E,F}$, since $\el{E,F}$ was taken minimal,
  $E'\mathrel{\dot{\nsim}_{\?E_\?B}}F'$ and therefore $(E',F')$ belongs
  to~$\?B$ since~$\?B$ is complete.  Thus \cref{cl-el} applies and
  shows that $E\mathrel{\dot\nsim_{\?E_\?B}}F$, a contradiction.
\end{proof}

\section{Complexity Upper Bounds}
\label{sec-upb}
In this section, we analyse the procedure $\textsc{CandidateBound}_n$
to derive an upper bound on the computed $\?E_{\?B}$.  In turn,
by \cref{eq-bel,th-el}, this bound will allow us to bound the
complexity of the bisimulation problem.  The idea is to analyse the
ranking function defined by \eqref{eq-rank} in order to bound how many
times the main loop of $\textsc{CandidateBound}_n$ can be executed.
We rely for this on a so-called `length function theorem'
from~\cite{schmitz14} to bound the length of descending sequences of
ordinals like~\eqref{eq-ranks}.  Finally, we classify the final upper
bound using the `fast-growing' complexity classes defined
in~\citep{schmitz16}.  A general introduction to
these techniques can be found in~\citep{schmitz17}.  
Throughout this section, we
assume that the values of $g$, $n$, $s$, and $c$ are the ones needed
for \cref{th-el} to hold.

\subsection{Controlled Descending Sequences}

Though all descending
sequences of ordinals are finite, we cannot bound their lengths in
general; e.g.,  $K+1>K>K-1>\cdots>0$ and
$\omega>K>K-1>\cdots>0$ are descending sequences of length $K+2$ for
all~$K$ in~$\+N$.  Nevertheless, the sequence~\eqref{eq-ranks}
produced by $\textsc{CandidateBound}_n$ is not arbitrary, because the
successive ranks are either determined by the input and the
initialisation phase, or the result of some computation, hence one
cannot use an arbitrary~$K$ as in these examples.

This intuition is captured by the notion of \emph{controlled
sequences}.  For an ordinal $\alpha<\omega^\omega$ (like the ranks
defined by~\eqref{eq-rank}), let us write $\alpha$ in Cantor normal
form as 
\begin{equation*}
\alpha=\omega^{n}\cdot c_n+\cdots+\omega^0\cdot c_0 
\end{equation*}
with
$c_0,\dots,c_n$ and $n$ in~$\+N$, and define its \emph{size} as
\begin{equation}
  \|\alpha\|\eqdef\max\{n,\max_{0\leq i\leq n}c_i\}\;.\label{eq-norm}
\end{equation}
Let $N_0$ be a natural number in~$\+N$ and $h{:}\,\+N\to\+N$ a
monotone inflationary function, i.e.,
$x\leq y$ implies $h(x)\leq h(y)$, and $x\leq h(x)$.
A sequence $\alpha_0,\alpha_1,\dots$
of ordinals below~$\omega^\omega$ is \emph{$(N_0,h)$-controlled} if,
for all $\ell$ in~$\+N$,
\begin{equation}\label{eq-ctrl}
  \|\alpha_\ell\|\leq h^\ell(N_0)\;,
\end{equation}
i.e., the size of the $\ell$th ordinal $\alpha_\ell$ is bounded by the
$\ell$th iterate of~$h$ applied to~$N_0$; in particular,
$\|\alpha_0\|\leq N_0$.  Because for each~$N\in\+N$, there are only
finitely many ordinals below~$\omega^\omega$ of size at most~$N$, the
length of controlled descending sequences is bounded~\citep[see,
e.g.,][]{schmitz14}.  One can actually give a precise bound on this
length in terms of \emph{subrecursive functions}, whose definition we
are about to recall.

\subsection{Subrecursive Functions}

Algorithms shown to terminate via an ordinal ranking function can have
a very high worst-case complexity.  In order to express such large
bounds, a convenient tool is found in subrecursive hierarchies, which
employ recursion over ordinal indices to define faster and faster
growing functions.  We define here two such hierarchies.

\paragraph{Fundamental Sequences}
A \emph{fundamental sequence} for a limit ordinal $\lambda$ is a
strictly ascending sequence $(\lambda(x))_{x<\omega}$ of ordinals
$\lambda(x)<\lambda$ with supremum~$\lambda$.
We use the standard assignment of fundamental
sequences to limit ordinals $\lambda\leq\omega^\omega$, defined
inductively by
\begin{align*}
  \omega^\omega(x)&\eqdef \omega^{x+1}\;,&
  (\beta+\omega^{k+1})(x)&\eqdef \beta+\omega^k\cdot(x+1)\;,
\end{align*}
where $\beta+\omega^{k+1}$ is in Cantor normal form.
This particular assignment satisfies, e.g., $0 < \lambda(x)
< \lambda(y)$ for all $x < y$. For instance, $\omega(x) = x + 1$ and
$(\omega^{3}+\omega^3+\omega)(x)=\omega^3+\omega^3+x+1$.

\paragraph{Hardy and Cicho\'n Hierarchies}
In the context of controlled sequences, the hierarchies of Hardy and
Cicho\'n turn out to be especially well-suited~\citep{cichon98}.  Let
$h{:}\,\+N\to\+N$ be a function.  For each such~$h$, the \emph{Hardy
hierarchy} $(h^\alpha)_{\alpha\leq\omega^\omega}$ and the \emph{Cicho\'n hierarchy}
$(h_\alpha)_{\alpha\leq\omega^\omega}$ relative to~$h$ are two families of functions
$h^\alpha,h_\alpha{:}\,\+N\to\+N$ defined by induction over~$\alpha$ by
\begin{align*}
  h^0(x)&\eqdef x\;,&
  h_0(x)&\eqdef 0\;,\\
  h^{\alpha+1}(x)&\eqdef h^\alpha(h(x))\;,
  &h_{\alpha+1}(x)&\eqdef1+h_\alpha(h(x))\;,\\
  h^\lambda(x)&\eqdef h^{\lambda(x)}(x)\;,&
  h_\lambda(x)&\eqdef h_{\lambda(x)}(x)\;.
\end{align*}
The Hardy functions are well-suited for expressing a large number of
iterations of the provided function~$h$.  For instance, $h^k$ for some
finite $k$ is simply the $k$th iterate of~$h$.  This intuition carries
over: $h^\alpha$ is a `transfinite' iteration of the function~$h$,
using a kind of diagonalisation in the fundamental sequences to handle
limit ordinals.  For instance, if we use the successor function $H(x)
= x+1$ as our function~$h$, we see that a first diagonalisation yields
$H^\omega(x) = H^{x+1}(x) = 2x+1$. The next diagonalisation occurs at
$H^{\omega\cdot 2}(x) = H^{\omega+x+1}(x)=H^\omega(2x + 1) = 4x +
3$. Fast-forwarding a bit, we get for instance a function of
exponential growth $H^{\omega^2}(x) = 2^{x+1} (x + 1) - 1$, and later
a non-elementary function $H^{\omega^3}$ akin to a tower of
exponentials, and a non primitive-recursive function
$H^{\omega^\omega}$ of Ackermannian growth.

In the following, we will use the following property of Hardy
functions~\citep{wainer72,cichon98}, which can be checked by induction provided
$\alpha+\beta$ is in Cantor normal form (and
justifies the use of superscripts):
\begin{gather}
  h^\alpha\circ h^\beta (x)=h^{\alpha+\beta}(x)\;,\label{eq-hardy-comp}\\
\intertext{and if $h$ is monotone inflationary, then so
is $h^\alpha$:}
\textnormal{if $x\leq y$, then } x\leq h^\alpha(x)\leq h^\alpha(y)\;.\label{eq-hardy-mono}
\end{gather}

Regarding the Cicho\'n functions, an easy induction on~$\alpha$ shows
that $H^\alpha(x) = H_\alpha(x) + x$ for the hierarchy relative to
$H(x)\eqdef x+1$.  But the main interest of
Cicho\'n functions is that they capture how many iterations are
performed by Hardy functions~\citep{cichon98}:
\begin{equation}\label{eq-hardy-cichon}
  h^{h_\alpha(x)}(x)=h^\alpha(x)\;.
\end{equation}

\paragraph{Length Function Theorem}
We can now state a `length function theorem' for controlled descending
sequences of ordinals.
\begin{theorem}[{\citep[Thm.~3.3]{schmitz14}}]\label{th-lft}
  Let $N_0\geq n+1$.  The maximal length of $(N_0,h)$-controlled
  descending sequences of ordinals in $\omega^{n+1}$ is
  $h_{\omega^{n+1}}(N_0)$.
\end{theorem}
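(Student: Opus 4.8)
This is Theorem~3.3 of \citep{schmitz14} quoted in the excerpt: for $N_0\geq n+1$, the maximal length of an $(N_0,h)$-controlled descending sequence of ordinals in $\omega^{n+1}$ equals $h_{\omega^{n+1}}(N_0)$.

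Let me think about how I would prove this.

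The natural approach is a double induction: an upper bound (every controlled descending sequence has length at most $h_{\omega^{n+1}}(N_0)$) and a matching lower bound (there is a controlled descending sequence of exactly that length). Both should go by induction on the ordinal $\omega^{n+1}$, or more precisely on ordinals below $\omega^{n+1}$, using the recursive structure of the Cichoń functions.

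For the upper bound: I'd want to generalize the statement to an arbitrary starting ordinal $\alpha_0 < \omega^{n+1}$, claiming that any $(N_0,h)$-controlled descending sequence $\alpha_0 > \alpha_1 > \cdots$ has length at most $h_{\alpha_0}(N_0)$ — wait, that's not quite right dimensionally either. Let me reconsider. The cleaner formulation: if $\alpha_0,\alpha_1,\dots$ is $(N_0,h)$-controlled and descending with $\alpha_0 \le \gamma$, then its length is at most $h_\gamma(N_0)$... no.

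Actually the right generalization uses a "predecessor" structure. For $\alpha > 0$ and a natural number $x$, define $P_x(\alpha)$ to be the largest ordinal that can follow $\alpha$ in an $(x\text{-bounded-step})$ descending sequence — concretely, if $\alpha$ is a successor, $P_x(\alpha) = \alpha - 1$; if $\alpha$ is a limit, $P_x(\alpha) = \alpha(x)$. Then the Cichoń function satisfies a recursion $h_\alpha(x) = 1 + h_{P_{h(x)}(\alpha)}(h(x))$ for $\alpha > 0$ (reading off the definitions of $h_{\alpha+1}$ and $h_\lambda$). So I'd prove by transfinite induction on $\alpha$: any $(N_0,h)$-controlled descending sequence $\alpha = \alpha_0 > \alpha_1 > \cdots$ has length $\le h_\alpha(N_0)$. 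The key combinatorial fact making this work is the "descent lemma": if $\beta < \alpha$ and $\|\beta\| \le x$, then $\beta \le P_x(\alpha)$ — i.e. $P_x(\alpha)$ really is the maximal possible successor in the sequence. Given that, the induction step just peels off $\alpha_0$, notes $\alpha_1 \le P_{h(N_0)}(\alpha_0)$ with the tail $(h(N_0),h)$-controlled, applies the inductive hypothesis to get length of the tail $\le h_{P_{h(N_0)}(\alpha_0)}(h(N_0))$, and adds $1$, matching the Cichoń recursion. **The descent lemma — establishing $P_x$ really bounds the successor — is the technical heart, and requires care with Cantor normal form to check that dropping a limit ordinal to a smaller ordinal of bounded size cannot overshoot $\alpha(x)$.**

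For the lower bound I'd construct the witnessing sequence greedily: starting from $\omega^{n+1}$ (or rather from a suitable predecessor making it $(N_0,h)$-controlled — note $\|\omega^{n+1}\| = n+1 \le N_0$ by hypothesis, so actually $\omega^{n+1}$ itself might be disallowed as it's not below $\omega^{n+1}$; one takes $\alpha_0 = \omega^{n+1}(N_0-1)$ or similar, or just works with the statement "length of the best sequence starting at any $\alpha < \omega^{n+1}$ is $h_\alpha(N_0)$" and then observes $h_{\omega^{n+1}}(N_0) = h_{\omega^{n+1}(N_0)}(N_0)$ is achieved from that successor), always taking $\alpha_{\ell+1} = P_{h^\ell(N_0)}(\alpha_\ell)$, i.e. always dropping to the maximal allowed successor. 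By the same Cichoń recursion run in reverse this sequence has length exactly $h_{\alpha_0}(N_0)$, and it is $(N_0,h)$-controlled essentially by the definition of $P_x$ preserving the size bound $\|P_x(\alpha)\| \le \max\{\|\alpha\|, x\}$ — another small CNF computation. I would remark that the paper only strictly needs the upper bound (to bound the number of iterations of the main loop of $\textsc{CandidateBound}_n$ in \cref{sec-upb}), so the matching lower bound can be stated for completeness with the construction sketched. Since this theorem is imported verbatim from \citep{schmitz14}, in the actual paper I would simply cite it; the proof plan above is what underlies that citation.
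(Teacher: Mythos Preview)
The paper gives no proof of this theorem: it is quoted verbatim as \citep[Thm.~3.3]{schmitz14} and used as a black box in \cref{sec-upb}. You correctly identify this at the end of your proposal. Your sketch of the underlying argument---the predecessor operator $P_x$, the descent lemma that $\beta<\alpha$ and $\|\beta\|\le x$ imply $\beta\le P_x(\alpha)$, transfinite induction for the upper bound, and the greedy construction for the matching lower bound---is indeed the approach taken in~\citep{schmitz14}, so there is nothing to compare and nothing missing.
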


\subsection{Controlling the Candidate Computation}

\paragraph{General Approach}

Consider an execution of $\textsc{CandidateBound}_n$ entering the main
loop at line~\ref{al-ml-start} and let us define
\begin{align}\label{eq-N}
  N&\eqdef\max\{n+1,\?E_\?B,\max_{0\leq i\leq n}s_i,\max_{0\leq i\leq
  n}|\?P_i|\}\;.
\intertext{We are going to exhibit $h{:}\,\+N\to\+N$ monotone and inflationary
such that, along any execution of $\textsc{CandidateBound}_n$, the
sequence of successive values $N_0,N_1,\dots$ defined by~\eqref{eq-N}
each time the execution enters the main loop on line~\ref{al-ml-start}
satisfies}
  N_\ell&\leq h^\ell(N_0)\label{eq-N-ctrl}
\intertext{for all~$\ell$ in~$\+N$.  By
definition of the ordinal size in~\eqref{eq-norm} of the ranks
from~\eqref{eq-rank}, $\|\alpha_\ell\|\leq N_\ell$.  Hence, this will
show that the corresponding sequence of ranks
$\alpha_0>\alpha_1>\cdots$ is $(N_0,h)$-controlled.
Therefore, \cref{th-lft} can be applied since furthermore $N_0\geq
n+1$, showing that the number of loop iterations is bounded by}
  L&\eqdef h_{\omega^{n+1}}(N_0)\;.
\intertext{By \eqref{eq-hardy-cichon}, this will entail an upper bound
on the returned~$\?E_\?B$ when $\textsc{CandidateBound}_n$ terminates:}
  \?E_\?B&\leq N_L\leq h^L(N_0)=h^{\omega^{n+1}}(N_0)\;.\label{eq-EB-h}
\end{align}

\paragraph{Controlling one Loop Execution}

As a preliminary, let us observe that, for all $0\leq i\leq n$, the
number of elements of $\pairs$ (defined in~\eqref{eq-pairs}) can be
bounded by
\begin{equation}\label{eq-pairs-size}
  |\pairs|\leq \big((|\N|+i)\cdot s_i^m\big)^{s_i}\cdot s_i^2\leq
   2^{3s_i|\?G|\log n\log s_i}\;.
\end{equation}
Indeed, the graph representation of some pair $(E,F)$ in $\pairs$ has
at most~$s_i$ vertices,
each labelled by a nonterminal symbol
from~$\N$ or a variable from~$\{x_1,\dots,x_i\}$ and with at most~$m$
outgoing edges; finally the two roots must be distinguished.

Let us turn our attention to the contents of the main loop.
\begin{lemma}\label{lem-ctrl}
For all $\ell$ in~$\+N$ we have $N_{\ell+1}\leq G_{\?G}(N_\ell)$ where
  $$G_{\?G}(x)\eqdef 2^{2^{2n+6}c^2g^2|\?G|^3x^4}\;.$$
\end{lemma}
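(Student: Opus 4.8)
The plan is to trace through a single iteration of the main loop of $\textsc{CandidateBound}_n$, starting from a state where the quantity $N_\ell$ defined by~\eqref{eq-N} has some value $x$, and to bound every quantity that enters into $N_{\ell+1}$: the bound $\?E_\?B$, the sizes $s_i$, and the cardinalities $|\?P_i|$. First I would recall that the only pair $(E,F)$ added to $\?B$ during one iteration lies in some $\pairs$ with $\size{E,F}\leq s_i\leq x$; by the definition of $\textsc{EqLevel}$ together with \cref{cl-el}, the equivalence level $e$ returned and stored satisfies $e\leq c\cdot(\?E_\?B\cdot\size{E,F}+\size{E,F}^2)\leq c\cdot(x\cdot x+x^2)=2cx^2$. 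This is the single source of growth: each $e_i$ can only increase to a value bounded by $2cx^2$.

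Next I would propagate this through the update block on lines~\ref{al-up-for}--\ref{al-up-end}. Using~\eqref{eq-si}, from $s_j\leftarrow 2s_{j+1}+g+e_{j+1}(\sinc+g)$ and the bounds $e_{j+1}\leq 2cx^2$, $\sinc\leq|\?G|$, $g\leq\|?\|$ (here $g$ is a grammatical constant, exponential in $|\?G|$, but we treat it as a free parameter as the lemma statement does), one gets a recurrence $s_j\leq 2s_{j+1}+(\text{poly in }g,c,|\?G|,x)$; unrolling it over the at most $n+1$ indices multiplies the additive term by at most $2^{n+1}$, giving $\max_i s_i\leq 2^{n+1}\cdot O(cg|\?G|x^2)$, which is absorbed into something like $2^{2n+4}cg|\?G|x^2$. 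Then $\?E_\?B=n+1+\sum_{j=0}^n e_j\leq (n+1)(1+2cx^2)\leq 2^{n+2}cx^2$ by~\eqref{eq-EB}. Finally, each $|\?P_i|\leq|\pairs|$, and here I invoke the bound~\eqref{eq-pairs-size}: $|\pairs|\leq 2^{3s_i|\?G|\log n\log s_i}$. Substituting the just-derived bound on $s_i$, which is itself of the shape $2^{\text{poly}}\cdot x^2$, the exponent becomes a polynomial in $n,c,g,|\?G|$ times $x^2\log^2 x\leq x^4$, and after collecting all constants one lands on an exponent of the form $2^{2n+6}c^2g^2|\?G|^3x^4$, i.e.\ exactly $N_{\ell+1}\leq\max\{\dots\}\leq G_\?G(N_\ell)$ as claimed, since the $|\?P_i|$ term dominates $\?E_\?B$ and the $s_i$.

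The main obstacle is purely bookkeeping: one must choose the intermediate constants generously enough that the crude inequalities $\log x\leq x$, $x^2\log^2 x\leq x^4$, $2^a\cdot 2^b\leq 2^{a+b}$, $a+b\leq ab$ (for $a,b\geq 2$) all go through simultaneously, while keeping the final exponent no larger than $2^{2n+6}c^2g^2|\?G|^3x^4$. The delicate point is the double substitution in~\eqref{eq-pairs-size}: $s_i$ already carries a factor $2^{2n+4}$, and it appears both as a multiplicative factor and inside $\log s_i$ in the exponent, so one has to verify that $s_i\cdot\log s_i$, with $s_i\leq 2^{2n+4}cg|\?G|x^2$, stays below $2^{n}\cdot(cg|\?G|)^2\cdot x^4$ or similar — the $\log s_i$ contributes the extra $2n$, the $cg|\?G|$ factors get squared, and the $x^2\cdot\log(x^2)\leq x^4$ step eats the logarithm. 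Everything else — monotonicity and inflationariness of $G_\?G$, which the statement implicitly needs for later use as the function $h$ — is immediate from the closed form, since $G_\?G(x)=2^{(\text{positive constant})\cdot x^4}\geq x$ for all $x\in\+N$ and is visibly nondecreasing.
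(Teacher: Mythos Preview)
Your proposal is correct and follows essentially the same approach as the paper's proof: bound the new equivalence level $e\leq 2cN_\ell^2$ from the definition of \textsc{EqLevel}, propagate this bound to all $e_j$ via line~\ref{al-ei-up} (the paper is slightly more explicit here, noting that $\max_{(E,F)\in\?B}\el{E,F}\leq\max\{e,\?E_\?B\}$), unroll the recurrence~\eqref{eq-si} to get $s_j\leq 2^{n+2}cg|\?G|N_\ell^2$, and then substitute into~\eqref{eq-pairs-size} to bound $|\?P_j|$, which dominates. The bookkeeping you anticipate is exactly what the paper does, with the same crude estimates.
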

\begin{proof}
  Assume we enter the main loop for the $\ell$th time with $N_\ell$ as
  defined in~\eqref{eq-N}.  On line~\ref{al-el-inc}, a new equivalence
  level~$e$ is introduced, with $e\leq 2cN_\ell^2$ since $\?E_\?B\leq
  N_\ell$ and $\size{E,F}\leq N_\ell$, thus in case of an update on
  line~\ref{al-up-ei}, we have $e_i\leq 2cN_\ell^2$.
  Consider now the \textbf{for} loop on
  lines~\ref{al-up-for}--\ref{al-pi-up}.  Regarding
  line~\ref{al-ei-up}, observe that
  $\max_{(E,F)\in\?B}\el{E,F}\leq\max\{e,\?E_{\?B}\}\leq 2cN_\ell^2$,
  thus
  \begin{align}
    e_j&\leq 2cN_\ell^2\label{eq-ej}
  \intertext{for all $j$ in~$\{i,\dots,0\}$ and
  $s_i\leq N_\ell$ by assumption.  Thus, regarding
  line~\ref{al-up-sj}, for all $j$ in~$\{i-1,\dots,0\}$,}
    s_j&\leq 2^{i-j}N_\ell+(2^{i-j}-1)(g+2cN_\ell^2(\sinc+g))\notag\\
       &\leq 2^{n+2}cg|\?G|N_\ell^2\;.\label{eq-sj}
  \intertext{%
  Regarding line~\ref{al-pi-up}, by \eqref{eq-pairs-size}, \eqref{eq-sj}
  entails that for all $j$ in~$\{i-1,\dots,0\}$,}
    |\?P_j|&\leq 2^{2^{2n+6}c^2g^2|\?G|^3N_\ell^4}\;.
  \end{align}
  Finally, regarding line~\ref{al-up-end}, by~\eqref{eq-ej}, $\?E_\?B\leq
  2(n+1)cN_\ell^2$.  
\end{proof}

\paragraph{Final Bound}

Let us finally express \eqref{eq-EB-h} in terms of~$n$ and~$|\?G|$.
First observe that, at the end of the initialisation phase of
lines~\ref{al-ini-start}--\ref{al-ini-end}, $e_i=0$, $s_i\leq
2^{n+1}g$, $|\?P_i|\leq 2^{2^{2n+5}s^2g^2\log|\?G|}$, and $\?E_\?B=n+1$, thus
\begin{equation}\label{eq-N0}
  N_0\leq 2^{2^{2n+5}s^2g^2\log|\?G|}\;.
\end{equation}

Then, because the bounds in \cref{lem-ctrl,eq-N0} are in terms of
$|\?G|$ (recall that the grammatical constant $g$ is exponential and
$n$, $s$, and $c$ are doubly exponential in terms of~$|\?G|$), there
exists a constant~$d$ independent from~$\?G$ such that
$|\?G|\leq N_0\leq H^{\omega^2\cdot d}(|\?G|)$ and
$G_{\?G}(x)\leq H^{\omega^2\cdot d}(\max\{x,|\?G|\})$ for all~$\?G$
and~$x$, where according to~\eqref{eq-hardy-comp}
$H^{\omega^2\cdot d}$ is the $d$th iterate of
$H^{\omega^2}(x)=2^{x+1}(x+1)-1$.  Then by~\eqref{eq-hardy-mono},
$h(x)\eqdef H^{\omega^2\cdot d}(x)$ is a suitable control function
that satisfies~\eqref{eq-N-ctrl} and therefore~\eqref{eq-EB-h}.

Finally, because $n\leq N_0\leq h(|\?G|)$ and
by~\eqref{eq-hardy-mono},
$h^{\omega^{n+1}}(N_0)\leq h^{\omega^\omega}(h(|\?G|))$.  We have just
shown the following upper bound.

\begin{lemma}\label{lem-upb}
  Let $\?G$ be a first-order grammar and $n$, $s$, and $g$ be defined
  as in \cref{th-el}.  Then $\?E_{\?B_{n,s,g}}\leq
  h^{\omega^{\omega}}(h(|\?G|))$ where $h(x)\eqdef H^{\omega^2\cdot
  d}(x)$ for some constant~$d$.
\end{lemma}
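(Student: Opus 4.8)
The plan is to chain together the three ingredients already assembled in this section: the per-iteration bound of \cref{lem-ctrl}, the initial bound of \eqref{eq-N0}, and the length-function machinery culminating in \eqref{eq-EB-h}. Concretely, I would proceed as follows. First I would observe that the quantities $N_0$ and $G_{\?G}(x)$ from \eqref{eq-N0} and \cref{lem-ctrl} are, once the grammatical constants $g,n,s,c$ are substituted, merely towers of elementary (in fact doubly-exponential) functions of $|\?G|$: since $g$ is exponential and $n,s,c$ are doubly exponential in $|\?G|$, both $N_0$ and $G_{\?G}(|\?G|)$ are bounded by a fixed finite iterate of the elementary function $H^{\omega^2}(x)=2^{x+1}(x+1)-1$. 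This is the key translation step: I want a single monotone inflationary function $h$, of the form $h(x)\eqdef H^{\omega^2\cdot d}(x)$ for a constant $d$ independent of $\?G$, such that simultaneously $|\?G|\leq N_0\leq h(|\?G|)$ and $G_{\?G}(x)\leq h(\max\{x,|\?G|\})$ for all $\?G$ and all $x$. Both inequalities are routine once one recalls that composition of Hardy functions at successor multiples of $\omega^2$ just iterates $H^{\omega^2}$ (by \eqref{eq-hardy-comp}), and that each of the finitely many exponentiations hidden in \eqref{eq-N0} and in $G_{\?G}$ is absorbed by one more application of the doubly-exponential $H^{\omega^2}$; monotonicity is \eqref{eq-hardy-mono}.

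Next, with such an $h$ in hand, I would invoke the control argument already laid out in the ``General Approach'' paragraph: since $N_{\ell+1}\leq G_{\?G}(N_\ell)\leq h(N_\ell)$ (using $N_\ell\geq|\?G|$, which holds because $N_0\geq|\?G|$ and the $N_\ell$ never drop below $n+1$, and in fact $N_0$ already dominates $|\?G|$), the sequence $N_0,N_1,\dots$ satisfies $N_\ell\leq h^\ell(N_0)$, so the ranks $\alpha_0>\alpha_1>\cdots$ of \eqref{eq-rank} form an $(N_0,h)$-controlled descending sequence in $\omega^{n+1}$. Because $N_0\geq n+1$, \cref{th-lft} bounds the number $L$ of loop iterations by $h_{\omega^{n+1}}(N_0)$, and then \eqref{eq-hardy-cichon} gives $\?E_{\?B_{n,s,g}}\leq N_L\leq h^L(N_0)=h^{h_{\omega^{n+1}}(N_0)}(N_0)=h^{\omega^{n+1}}(N_0)$, which is exactly \eqref{eq-EB-h}. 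Finally, since $n\leq N_0\leq h(|\?G|)$, monotonicity of Hardy functions in both the index and the argument (\eqref{eq-hardy-mono} for the argument, and the fact that $\omega^{n+1}\leq\omega^\omega$ with the standard fundamental-sequence assignment for the index) yields $h^{\omega^{n+1}}(N_0)\leq h^{\omega^\omega}(h(|\?G|))$, which is the claimed bound.

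I expect the main obstacle to be the careful bookkeeping in the translation step: one must verify that a single constant $d$, independent of $\?G$, suffices to dominate both $N_0$ and the per-step growth $G_{\?G}$, taking into account that the ``grammatical constants'' $n$, $s$, $c$, $g$ themselves depend on $|\?G|$ and appear inside the exponents of \eqref{eq-N0} and \cref{lem-ctrl}. This requires tracking the nesting depth of exponentials contributed by each of these constants and checking that it is a fixed finite number, so that a fixed number $d$ of applications of $H^{\omega^2}$ absorbs everything; the monotonicity and composition laws \eqref{eq-hardy-comp} and \eqref{eq-hardy-mono} then make the rest of the argument essentially bookkeeping-free. A secondary subtlety is ensuring that the control inequality $N_{\ell+1}\leq h(N_\ell)$ genuinely holds for every iteration and not just the first, which follows because \cref{lem-ctrl} is stated uniformly for all $\ell$ and because $h$ is inflationary so that $\max\{N_\ell,|\?G|\}=N_\ell$ throughout.
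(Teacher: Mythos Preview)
Your proposal is correct and follows essentially the same route as the paper: bound $N_0$ and $G_{\?G}$ by a fixed iterate $H^{\omega^2\cdot d}$ of $H^{\omega^2}$, use this $h$ as the control function to instantiate~\eqref{eq-EB-h}, and then pass from $h^{\omega^{n+1}}(N_0)$ to $h^{\omega^\omega}(h(|\?G|))$ via $n\leq N_0\leq h(|\?G|)$ and monotonicity. The subtleties you flag (a uniform $d$ absorbing the nested exponentials contributed by $g,n,s,c$, and the fact that $\max\{N_\ell,|\?G|\}=N_\ell$ throughout) are exactly the ones the paper treats informally; for the latter, note that $N_\ell\geq s_n=s$ at every iteration, and $s$ already dominates~$|\?G|$.
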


\subsection{Fast-Growing Complexity}

It remains to combine \cref{eq-bel} with \cref{lem-upb} in order to
provide an upper bound for the bisimilarity problem.  We will employ
for this the \emph{fast-growing} complexity classes defined
in~\citep{schmitz16}.  This is an ordinal-indexed hierarchy of
complexity classes $(\F\alpha)_{\alpha<\varepsilon_0}$, 
that uses the Hardy functions
$(H^\alpha)_\alpha$ relative to $H(x)\eqdef x+1$ as a standard against
which we can measure high complexities.

\paragraph{Fast-Growing Complexity Classes}
Let us first define
\begin{align}
  \FGH\alpha&\eqdef\bigcup_{\beta<\omega^\alpha}\ComplexityFont{FDTIME}\big(H^\beta(n)\big)
  \intertext{%
    as the class of functions computed by deterministic Turing
    machines in time $O(H^\beta(n))$ for some $\beta<\omega^\alpha$.
    This captures for instance the class of Kalmar elementary
    functions as $\FGH 3$ and the class of primitive-recursive
    functions as $\FGH\omega$~\citep{lob70,wainer72}.  Then we let}
  \F\alpha&\eqdef\bigcup_{p\in\FGH\alpha}\ComplexityFont{DTIME}\big(H^{\omega^\alpha}\!(p(n))\big)
\end{align}
denote the class of decision problems solved by deterministic Turing
machines in time $O\big(H^{\omega^\alpha}\!(p(n))\big)$ for some
function~$p\in\FGH\alpha$.  The intuition behind this quantification
over~$p$ is that, just like e.g.\
$\ComplexityFont{EXPTIME}=\bigcup_{p\in\poly}\ComplexityFont{DTIME}\big(2^{p(n)}\big)$
quantifies
over polynomial functions to provide enough `wiggle room' to account
for polynomial reductions, $\F\alpha$ is closed under $\FGH\alpha$
reductions~\citep[Thms.~4.7 and~4.8]{schmitz16}.
\begin{figure}[tbp]
  \centering\scalebox{.87}{
  \begin{tikzpicture}[every node/.style={font=\small}]
    \shadedraw[color=black!90,top color=black!20,middle
    color=black!5,opacity=20,shading angle=-15](-1,0) arc (180:0:4.8cm);
    \draw[color=black!90,thick,fill=black!10](-.7,0) arc (180:0:3.8cm);
    \draw[color=black!90,fill=black!7,thick](-.65,0) arc (180:0:3.5cm);
    \draw[color=violet!90!black,fill=violet!20,thick](-.6,0) arc (180:0:3.25cm);
    \shadedraw[color=black!90,top color=black!20,middle color=black!5,opacity=20,shading angle=-15](-.5,0) arc (180:0:3cm);
    \draw[color=blue!90,fill=blue!20,thick](-.1,0) arc (180:0:1.7cm);
    \shadedraw[color=black!90,top color=black!20,middle
    color=black!5,opacity=20,shading angle=-15,thin](0,0) arc (180:0:1.5cm);
    \draw (1.5,.5) node{$\ComplexityFont{ELEMENTARY}$};
    \draw (4,1.2) node[color=blue]{$\F3=\!\TOWER$};
    \draw[color=blue!90,thick] (3.15,1) -- (3.05,.9);
    \draw (2.5,1.9) node{$\bigcup_k\!\F{k}{=}\ComplexityFont{PRIMITIVE\text-RECURSIVE}$};
    \draw (5.32,1.5) node[color=violet!90!black]{$\F\omega$};
    \draw (5.73,1.6) node[color=black!70]{$\F{\!\omega^{\!2}}$};
    \draw (6.21,1.7) node[color=black!60]{$\F{\!\omega^3}$};
    \draw (3.95,4) node{$\bigcup_k\!\F{\omega^{k}}=\ComplexityFont{MULTIPLY\text-RECURSIVE}$};
    \draw[loosely
    dotted,very thick,color=black!70](6.7,1.8) --
    (7.2,1.92); \end{tikzpicture}} \caption{Pinpointing
    $\F{\omega}=\ACK$ among the complexity
    classes beyond \ComplexityFont{ELEMENTARY}~\citep{schmitz16}.\label{fig-fg}}
\end{figure}
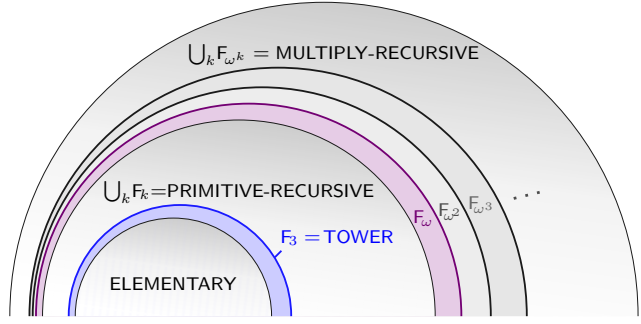

For instance, $\TOWER\eqdef\F 3$ defines the class of
problems that can be solved using computational resources bounded by a
tower of exponentials of elementary height in the size of the input,
$\bigcup_{k\in\+N}\F k$ is the class of primitive-recursive decision
problems, and $\ACK\eqdef\F\omega$ is the class
of problems that can be solved using computational resources bounded
by the Ackermann function applied to some primitive-recursive function
of the input size---here it does not matter for $\alpha>2$ whether we
are considering deterministic, nondeterministic, alternating, time, or
space bounds~\citep[Sec.~4.2.1]{schmitz16}.  See \cref{fig-fg} for a
depiction.

\begin{theorem}\label{th-upb}
  The bisimulation problem for first-order grammars is in
  \ACK, and in $\F{n+4}$ if~$n$ is fixed.
\end{theorem}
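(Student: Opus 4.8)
\emph{The plan.} The idea is to combine \cref{th-el}, \cref{th-algo} and \cref{eq-bel} into a decision procedure, and then calibrate its running time with the length-function machinery of \cref{sec-upb}. On input $\?G$ and $E,F$ I would first compute the grammatical constants $g$, $n$, $s$, $c$ of \cref{th-el} (elementary in $|\?G|$; the only slightly involved ingredient is the dynamic program behind $d_0$ in~\eqref{eq-d0}), then run $\textsc{CandidateBound}_n(\?G,s,g,c)$, which by \cref{th-algo} terminates and returns $\?E_{\?B_{n,s,g}}$, and finally test whether $\el{E,F}\le c\cdot(\?E_{\?B_{n,s,g}}\cdot\size{E,F}+\size{E,F}^2)$ by playing the bisimulation game of \cref{sub-game}. \Cref{th-el} makes this correct: $E\sim F$ if and only if this last inequality fails.

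\emph{Controlling the computation.} The running time is dominated by the call to $\textsc{CandidateBound}_n$ and by the final bounded-level test. For the former I would reuse the analysis of \cref{sec-upb}: with $h(x)\eqdef H^{\omega^2\cdot d}(x)$ for the constant $d$ of \cref{lem-upb}, \cref{lem-ctrl} and~\eqref{eq-N0} show that the quantities $N_0,N_1,\dots$ of~\eqref{eq-N} recorded on successive entries of the main loop satisfy $N_{\ell+1}\le h(N_\ell)$ and $N_0\le h(|\?G|)$, so the rank sequence~\eqref{eq-ranks} is $(N_0,h)$-controlled and \cref{th-lft} bounds the number of iterations by $L\eqdef h_{\omega^{n+1}}(N_0)$; by~\eqref{eq-hardy-cichon} every $N_\ell$ is then at most $h^{\omega^{n+1}}(N_0)$, a quantity that also bounds $L$ (since $h$ is strictly inflationary) and that is at most $h^{\omega^{\omega}}(h(|\?G|))$ by \cref{lem-upb}. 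A single iteration takes time elementary in $N_\ell$ and $|\?G|$: updating the sets $\?P_j$ stays within the bound~\eqref{eq-pairs-size}, and each call to $\textsc{EqLevel}$ amounts to running the bounded-level algorithm of \cref{eq-bel} with bound $e\le 2cN_\ell^2$ on a pair of terms of size $\le N_\ell$, which, after converting the resulting alternating-time bound into a deterministic one, is elementary in $N_\ell$ and $|\?G|$. Multiplying by $L$ and adding the final test---itself in $\ComplexityFont{ATIME}\big(\size{E,F}+\poly(|\?G|)\cdot c(\?E_{\?B_{n,s,g}}\cdot\size{E,F}+\size{E,F}^2)\big)$---the whole procedure runs in deterministic time elementary in $\size{E,F}$ and in $h^{\omega^{\omega}}(h(|\?G|))$.

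\emph{Classifying the bound.} It remains to locate this quantity in the fast-growing hierarchy. Since $h=H^{\omega^2\cdot d}$, we have $h\le H^{\omega^3}$ pointwise (because $\omega^2\cdot d<\omega^3$), while $h(|\?G|)=H^{\omega^2\cdot d}(|\?G|)$ is still an elementary function of $|\?G|$; using this together with~\eqref{eq-hardy-comp},~\eqref{eq-hardy-mono} and the nesting identity $(H^{\omega^a})^{\omega^b}=H^{\omega^{a+b}}$, one gets $h^{\omega^{\omega}}(h(|\?G|))\le H^{\omega^{\omega}}(p_0(|\?G|))$ for an elementary $p_0$. Since the running time is elementary in this value and in $\size{E,F}$, it is at most $H^{\omega^{\omega}}(p(\size{E,F}+|\?G|))$ for some primitive-recursive (indeed elementary) $p$---the elementary overhead being absorbed into the argument, equivalently by closure of $\F\omega$ under $\FGH\omega$ reductions. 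As $\FGH\omega$ is exactly the class of primitive-recursive functions, this puts the bisimulation problem in $\F\omega=\ACK$. When $n$ is fixed (a restriction on $\?G$, since $n=m^{d_0}$), $N_0$ becomes an elementary function of $|\?G|$ and $h$ a fixed elementary function, so the same computation with $h^{\omega^{n+1}}$ in place of $h^{\omega^{\omega}}$ gives $h^{\omega^{n+1}}(N_0)\le H^{\omega^{n+4}}(p_1(|\?G|))$ for an elementary $p_1$; absorbing the elementary overhead once more places the whole procedure in $\F{n+4}$.

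\emph{Expected difficulty.} The first paragraph is routine. The substance---and where I expect the work to concentrate---is the calibration of the last two paragraphs: checking that the ranks really form an $(N_0,h)$-controlled sequence with a tame control function (this is exactly \cref{lem-ctrl} and~\eqref{eq-N0}), pinning down the iteration count through \cref{th-lft}, and then pushing the resulting $h_{\omega^{n+1}}$ and $h^{\omega^{n+1}}$ bounds through the Hardy and Cicho\'n hierarchies and the alternating-to-deterministic conversions without leaking more than a bounded number of levels. The delicate inequalities are already supplied by \cref{lem-ctrl,lem-upb} and \cref{th-lft}; what remains is careful bookkeeping in the fast-growing classes.
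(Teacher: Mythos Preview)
Your proposal is correct and follows essentially the same approach as the paper: combine \cref{eq-bel}, \cref{th-el}, and \cref{lem-upb}, then place the resulting Hardy-function bound in the fast-growing hierarchy. The paper's own proof is terser, citing \citep[Lem.~4.6 and Thm.~4.2]{schmitz16} for the absorption of elementary overheads and for the passage from $h^{\omega^{n+1}}$ with $h=H^{\omega^2\cdot d}\in\FGH 3$ to $H^{\omega^{n+4}}$, whereas you spell this out via the nesting identity $(H^{\omega^a})^{\omega^b}=H^{\omega^{a+b}}$ together with $h\le H^{\omega^3}$---which is exactly the content of that theorem and yields the same $\F{n+4}$ and $\F\omega$ bounds.
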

\begin{proof}
  This is a consequence of \cref{eq-bel} combined with
  \cref{th-el,lem-upb}; the various overheads on top of the
  bound on~$\?E_{\?B_{n,s,g}}$ are of course negligible for such
  high complexities~\citep[Lem.~4.6]{schmitz16}.  We rely here on
  \citep[Thm.~4.2]{schmitz16} to translate from $h^{\omega^{n+1}}$
  with $h=H^{\omega^2\cdot d}\in\FGH 3$ into a bound in terms
  of~$H^{\omega^{n+4}}$.
\end{proof}

\section{Pushdown Processes}
\label{sec-pda}
The complexity upper bounds obtained in \cref{sec-upb} are stated in
terms of first-order grammars.  In this section, we revisit the known
reduction from pushdown systems to first-order grammars (as given
in~\citep{jancar12,jancar16}), and we also give a direct reduction
from first-order grammars to pushdown systems (instead of giving just
a general reference to~\citep{CourcelleHandbook,caucal95}).  We do
this first to make clear that the reductions are primitive recursive
(in fact, they are polynomial-time reductions), and second to show
that, in the real-time case, \cref{th-upb} provides
primitive-recursive bounds for pushdown systems with a fixed number of
states.

\paragraph{Pushdown Systems}
Let us first recall that a \emph{pushdown system} (\emph{PDS}) is a
tuple $M=\tup{Q,\Sigma,\Gamma,\Delta}$ of finite sets where the
elements of $Q,\Sigma,\Gamma$ are called \emph{control states},
\emph{actions} (or \emph{terminal letters}), and \emph{stack symbols},
respectively; $\Delta$ contains \emph{transition rules} of the form
$pY \step{a}q\gamma$ where $p,q\in Q$, $Y\in\Gamma$,
$a\in \Sigma\uplus\{\varepsilon\}$, and $\gamma\in \Gamma^\ast$.  A
pushdown system is called \emph{real-time} if $a$ is restricted to be
in~$\Sigma$, i.e., if no $\varepsilon$ transition rules appear
in~$\Delta$. 

A PDS $M=\tup{Q,\Sigma,\Gamma,\Delta}$ generates the labelled
transition system
\begin{equation*}
\?L_M\eqdef(Q\times \Gamma^\ast,\Sigma\uplus\{\varepsilon\},
(\step{a})_{a\in\Sigma\cup\{\varepsilon\}})
\end{equation*}
where each rule $pY \step{a}q\gamma$ induces transitions
$pY\gamma' \step{a}q\gamma\gamma'$ for all $\gamma'\in \Gamma^\ast$.
Note that~$\?L_M$ might feature \emph{$\varepsilon$-transitions} (also
called \emph{$\varepsilon$-steps})
$pY\gamma'\step\varepsilon q\gamma\gamma'$ if the PDS is not real-time.

\subsection{From PDS to First-Order Grammars}

We recall 
a construction already presented in the appendix of the
extended version of~\citep{jancar16}.  The idea is that, although
first-order grammars lack the notion of control state, the behaviour
of a pushdown system can nevertheless be captured by a first-order
grammar that uses $m$-ary terms where $m$~is the number of control
states.

\begin{figure}[tbp]
\centering\scalebox{.9}{\hspace*{-1pt}%
  \begin{tikzpicture}[auto,on grid]
    \node[square](A) {$A$};
    \node[square,left=.43 of A]{$p$};
    \node[square,below=.9 of A](C){$C$};
    \node[square,below=.9 of C](B){$B$};
    \path[every node/.style={font=\tiny,inner sep=1pt,color=black!70}]
      (A) edge (C)
      (C) edge (B);
    \node[lsquare,right=2 of A](pA){$pA$};
    \node[lsquare,below left =.9 and .8 of pA](q1C){$q_1C$};  
    \node[lsquare,below      =.9        of pA](q2C){$q_2C$};  
    \node[lsquare,below right=.9 and .8 of pA](q3C){$q_3C$}; 
    \node[lsquare,below left =.9 and .8 of q2C](q1B){$q_1B$};  
    \node[lsquare,below      =.9        of q2C](q2B){$q_2B$};  
    \node[lsquare,below right=.9 and .8 of q2C](q3B){$q_3B$}; 
    \node[lsquare,below left =.9 and .8 of q2B](q1){$q_1$};  
    \node[lsquare,below      =.9        of q2B](q2){$q_2$};  
    \node[lsquare,below right=.9 and .8 of q2B](q3){$q_3$};  
    \path[every node/.style={font=\tiny,inner sep=1pt,color=black!70}]
      (pA) edge (q1C)
      (pA) edge (q2C)
      (pA) edge (q3C)
      (q1C) edge (q1B)
      (q1C) edge (q2B)
      (q1C) edge (q3B)
      (q2C) edge (q1B)
      (q2C) edge (q2B)
      (q2C) edge (q3B)
      (q3C) edge (q1B)
      (q3C) edge (q2B)
      (q3C) edge (q3B)
      (q1B) edge (q1)
      (q1B) edge (q2)
      (q1B) edge (q3)
      (q2B) edge (q1)
      (q2B) edge (q2)
      (q2B) edge (q3)
      (q3B) edge (q1)
      (q3B) edge (q2)
      (q3B) edge (q3);
    \node[square,right=3.5 of pA](a) {$A$};
    \node[square,left=.43 of a]{$p$};
    \node[right=.8 of a]{$\step a$};
    \node[square,right=2 of a](c){$C$};
    \node[square,left=.43 of c]{$q$};
    \node[square,below=.9 of c](ca){$A$};
    \path[every node/.style={font=\tiny,inner sep=1pt,color=black!70}]
      (c) edge (ca);
    \node[lsquare,below left =1.8 and .4 of a](pa){$pA$};
    \node[lsquare,below left =.9 and .8 of pa](x1){$x_1$};  
    \node[lsquare,below      =.9        of pa](x2){$x_2$};  
    \node[lsquare,below right=.9 and .8 of pa](x3){$x_3$}; 
    \node[right=1.2 of pa]{$\step a$};
    \node[lsquare,right=2.6 of pa](qc){$qC$};
    \node[lsquare,below left =.9 and .8 of qc](q1a){$q_1A$};  
    \node[lsquare,below      =.9        of qc](q2a){$q_2A$};  
    \node[lsquare,below right=.9 and .8 of qc](q3a){$q_3A$}; 
    \node[lsquare,below left =.9 and .8 of q2a](x1a){$x_1$};  
    \node[lsquare,below      =.9        of q2a](x2a){$x_2$};  
    \node[lsquare,below right=.9 and .8 of q2a](x3a){$x_3$};   
    \path[every node/.style={font=\tiny,inner sep=1pt,color=black!70}]
      (pa) edge (x1)
      (pa) edge (x2)
      (pa) edge (x3)
      (qc) edge (q1a)
      (qc) edge (q2a)
      (qc) edge (q3a)
      (q1a) edge (x1a)
      (q1a) edge (x2a)
      (q1a) edge (x3a)
      (q2a) edge (x1a)
      (q2a) edge (x2a)
      (q2a) edge (x3a)
      (q3a) edge (x1a)
      (q3a) edge (x2a)
      (q3a) edge (x3a);    
    \node[right=.55 of C]{\Large$\rightsquigarrow$};
    \node[below left=.37 and 1.2 of ca,rotate=-90]{\Large$\rightsquigarrow$};
  \end{tikzpicture}}
\caption{The PDS configuration $pACB$ encoded as a term (left), and
  the translation of the PDS rule $pA\step a qCA$ into a
  first-order rule (right).}\label{fig:pdatofo}
\end{figure}
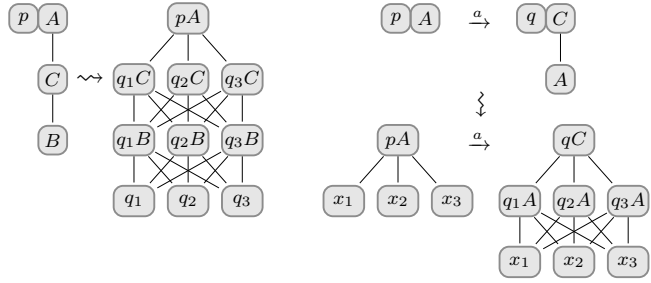

\Cref{fig:pdatofo} (left) presents a configuration of a PDS---i.e., a
state in $\?L_M$---as a term; here we assume that $Q=\{q_1,q_2,q_3\}$.
The string $pACB$, depicted on the left in a convenient vertical form,
is translated into a term presented by an acyclic graph in the
figure.  On the right in \cref{fig:pdatofo} we can see the translation
of the PDS transition rule $pA\step{a}qCA$ into a rule of a first-order
grammar.

\subsubsection{Real-Time Case}

Let us first assume that $M$ is a real-time PDS, i.e., that each PDS
transition rule $pY\step a q\gamma$ is such that $a$ is in~$\Sigma$.
We are interested in the following decision problem.
\begin{problem}[Strong Bisimulation]
\hfill\\[-1.5em]\begin{description}[\IEEEsetlabelwidth{question}]
\item[input] A real-time pushdown system
  $M=\tup{Q,\Sigma,\Gamma,\Delta}$ and two configurations $pY,qZ$
  in~$Q\times\Gamma$.
\item[question] Is $pY\sim qZ$ in the labelled transition
  system~$\?L_M$?
\end{description}\end{problem}

Formally, for a real-time PDS $M=(Q,\Sigma,\Gamma,\Delta)$, where
$Q=\{q_1,q_2,\dots,q_m\}$, we can define the first-order grammar
\begin{equation*}
 \?G_M\eqdef(\N,\Sigma,\ru)
\end{equation*}
where $\N\eqdef Q\cup (Q\times \Gamma)$,
with $\ar(q)\eqdef 0$ and $\ar((q,X))\eqdef m$ for all $q$ in~$Q$ and
$X$ in~$\Gamma$; the set $\ru$ is defined below.  We write $[q]$ and
$[qY]$ for nonterminals $q$ and $(q,Y)$, respectively, and we map each
configuration $p\gamma$ to a (finite) term $\?T(p\gamma)$ in $\terms$ defined
by structural induction:
\begin{align}
  \?T(p\varepsilon)&\eqdef[p]\;,\label{eq-Teps}\\
  \?T(pY\gamma)&\eqdef
                 [pY](\?T(q_1\gamma),\?T(q_2\gamma),\dots,\?T(q_m\gamma))\;.\label{eq-TY}
\intertext{For a smooth translation of rules, we introduce a special `stack
variable' $x$, and we set }\?T(q_ix)&\eqdef x_i\label{eq-Tx}
\end{align} for all $i\in\{1,\dots,m\}$.

A PDS transition rule $pY\step{a}q\gamma$
in~$\Delta$ with $a$ in~$\Sigma$ is then translated into the
first-order grammar rule
\begin{align}\label{eq-ru}
\?T(pYx)&\step{a}\?T(q\gamma x)
\intertext{%
in~$\ru$.  Hence $pY\step{a}q_i$ is
translated into}
\notag
[pY](x_1,\dots,x_m)&\step{a}x_i
\intertext{%
and
$pY\step{a}qZ\gamma$ is translated into}
\notag
[pY](x_1,\dots,x_m)&\step{a} [qZ](\?T(q_1\gamma x),\dots,\?T(q_m\gamma
x))\;.
\end{align}

It should be obvious that the labelled transition system~$\?L_M$ is
isomorphic with the restriction of the labelled transition system
$\?L_{\?G_M}$
to the states $\?T(p\gamma)$ where $p\gamma$
are configurations of~$M$; moreover, the set
$\{\?T(p\gamma)\mid p\in Q, \gamma\in\Gamma^\ast\}$ is closed w.r.t.\
reachability in 
$\?L_{\?G_M}$: %
if $\?T(p\gamma)\step{a}F$
in $\?L_{\?G_M}$, then $F=\?T(q\gamma')$ where $p\gamma\step{a}q\gamma'$
in $\?L_M$.

\begin{corollary}\label{cor-pds}
  The strong bisimulation problem for real-time pushdown systems is in
  \ACK, and in $\F{|Q|+4}$ if the number $|Q|$ of states is fixed.
\end{corollary}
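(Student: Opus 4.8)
The plan is to instantiate \cref{th-upb} through the translation $M\mapsto\?G_M$ recalled above. One first checks that this translation is a polynomial-time (hence primitive-recursive) reduction: the map~$\?T$ sends each rule of~$\Delta$ to exactly one rule of~$\ru$, and $\?T(p\gamma)$ is a finite term of size at most $|Q|\cdot(|\gamma|+1)$, so that $|\?G_M|=\poly(|M|)$. Its correctness follows from the isomorphism noted just before the statement---between~$\?L_M$ and the restriction of~$\?L_{\?G_M}$ to the set $\{\?T(p\gamma)\}$ of encoded configurations---together with the fact that this set is closed under reachability in~$\?L_{\?G_M}$; indeed, since bisimilarity of two states depends only on the sub-LTS of states reachable from them, $pY\sim qZ$ in~$\?L_M$ if and only if $\?T(pY)\sim\?T(qZ)$ in~$\?L_{\?G_M}$. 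As \ACK\ is closed under primitive-recursive reductions, the first half of the statement follows at once from \cref{th-upb}.

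For the refined bound when $|Q|$ is fixed, the point to address is that the grammatical constant $n=m^{d_0}$ of~$\?G_M$ is \emph{not} bounded in terms of~$|Q|$ alone: here $m=|Q|$, but $d_0$ is essentially the length of a shortest stack-emptying run of~$M$, which may be exponential in~$|\Gamma|$ (e.g.\ with rules $pY_j\step{a}pY_{j+1}Y_{j+1}$ for $j<k$ and $pY_k\step{a}p$, emptying~$Y_0$ takes $2^{k+1}-1$ steps). The key observation is that the quantity actually controlling the cost of deciding $\?T(pY)\sim\?T(qZ)$ is much smaller: every term reachable from~$\?T(pY)$ or~$\?T(qZ)$ in~$\?L_{\?G_M}$ is of the form~$\?T(q\gamma)$, and by~\eqref{eq-Teps} and~\eqref{eq-TY} such a term has at most $m=|Q|$ pairwise distinct subterms at any fixed depth, namely $\?T(q_1\gamma'),\dots,\?T(q_m\gamma')$ for the appropriate suffix~$\gamma'$ of~$\gamma$.

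Consequently, in the constructions of~\citep{jancar18} underlying \cref{th-el,cl-el}---in particular whenever a reachable term is written~$E'\sigma$ by truncating it at depth~$d_0$---only the variables $x_1,\dots,x_{|Q|}$ are ever needed, and likewise for the pairs of terms entering the candidate basis; so one may replay the argument of \cref{th-el} with $n$ set to~$|Q|$, the other grammatical constants ($g$ exponential, $s$ and~$c$ doubly exponential) staying so in $|\?G_M|=\poly(|M|)$. Plugging $n=|Q|$ into the analysis of \cref{sec-upb} then yields $\?E_{\?B_{|Q|,s,g}}\le h^{\omega^{|Q|+1}}(N_0)$ with $N_0$ elementary in~$|\?G_M|$ and $h=H^{\omega^2\cdot d}\in\FGH 3$, and feeding this together with $\size{\?T(pY),\?T(qZ)}=O(|Q|)$ into \cref{eq-bel,th-el} places the problem in~$\F{|Q|+4}$, exactly as in the proof of \cref{th-upb}. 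The delicate point is precisely the passage from~$n$ to~$|Q|$: it is invisible to the black-box statements of \cref{th-el,cl-el} and requires inspecting~\citep{jancar18} to confirm that the relevant part of that argument only manipulates terms reachable from the input pair, and bounded-depth truncations thereof, whose depth-$d_0$ cuts therefore never exceed~$|Q|$ distinct subterms.
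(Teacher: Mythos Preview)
Your proposal is correct and follows essentially the same approach as the paper: reduce via the translation $M\mapsto\?G_M$ and then observe that, for the terms $\?T(p\gamma)$ arising from PDS configurations, the number of depth-$d$ subterms is exactly~$|Q|$ at every level, so the parameter~$n$ governing the analysis of \cref{sec-upb} may be taken as~$|Q|$ rather than~$m^{d_0}$. The paper's proof is terser---it simply points to the discussion after~\eqref{eq-n} and to \cref{fig:pdatofo}---whereas you spell out more carefully that this replacement is not visible from the black-box statements of \cref{th-el,cl-el} and requires checking that the relevant constructions in~\citep{jancar18} only ever truncate reachable terms; this added care is appropriate and does not diverge from the paper's argument.
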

\begin{proof}
  What we have sketched above is a polynomial-time (in fact,
  \ComplexityFont{logspace}) reduction from the strong bisimulation
  problem in (real-time) pushdown systems to the bisimulation problem
  in first-order grammars, for which we can apply \cref{th-upb}.
  Observe that, in this translation and according to the discussion
  after~\eqref{eq-n}, we may bound~$n$ by the number~$|Q|$ of states
  of the given pushdown system, which justifies the
  primitive-recursive $\F{|Q|+4}$ upper bound when the number of
  states is fixed.  (\Cref{fig:pdatofo} makes clear that all branches
  in $\?T(p\gamma)$ have the same lengths, and there are
  precisely~$|Q|$ depth-$d$ subterms of $\?T(p\gamma)$, for each
  $d\leq\height{\?T(p\gamma)}$.)
\end{proof}

\subsubsection{General Case}
In the case of labelled transition systems
$\?L=\tup{\?S,\Sigma,({\step a})_{a\in\Sigma\uplus\{\varepsilon\}}}$
with a \emph{silent action}~$\varepsilon$,
by $s\dstep w t$,
for $w\in\Sigma^\ast$, we denote 
that there are $s_0,s_1,\dots,s_\ell\in \?S$ and
$a_1,\dots,a_\ell\in\Sigma\uplus\{\varepsilon\}$
such that 
$s_0=s$, $s_\ell=t$, $s_{i-1}\step{a_i}s_i$ for all
$i\in\{1,\dots,\ell\}$, and $w=a_1\cdots a_\ell$.
Thus $s\dstep\varepsilon t$ denotes
an arbitrary sequence of silent steps, and 
$s\dstep{a} t$ for $a\in\Sigma$ denotes that there are $s',t'$ such
that $s\dstep{\varepsilon}s'\step{a}t'\dstep{\varepsilon}t$.

A relation $R\subseteq\?S\times\?S$ is a \emph{weak bisimulation}
if the following two conditions hold:
\begin{description}[\IEEEsetlabelwidth{(zag)}]
\item[(zig)] if $s\mathbin R t$ and $s\step a s'$ for some
  $a\in\Sigma\uplus\{\varepsilon\}$, then there exists $t'$ such that
  $t\dstep a t'$ and $s'\mathbin R t'$;
\item[(zag)] if $s\mathbin R t$ and $t\step a t'$ for some
  $a\in\Sigma\uplus\{\varepsilon\}$, then there exists $s'$ such that
  $s\dstep a s'$ and $s'\mathbin R t'$.
\end{description}
By $\approx$ we denote \emph{weak bisimilarity}, i.e., the largest
weak bisimulation (the union of all weak bisimulations), which is an
equivalence relation.

We are now interested in the following problem.
\begin{problem}[Weak Bisimulation]
\hfill\\[-1.5em]\begin{description}[\IEEEsetlabelwidth{question}]
\item[input] A pushdown system
  $M=\tup{Q,\Sigma,\Gamma,\Delta}$ and two configurations $pY,qZ$
  in~$Q\times\Gamma^\ast$.
\item[question] Is $pY\approx qZ$ in the labelled transition
  system~$\?L_M$?
\end{description}\end{problem}

Unfortunately, in general the weak bisimulation problem for PDS is
undecidable, already for one-counter systems~\cite{mayr03}; we can
also refer, e.g., to~\cite{jancar08} for further discussion.  As
already mentioned in the introduction, we thus consider PDS with
(very) \emph{restricted silent actions}: each rule
$pY\step{\varepsilon}q\gamma$ in $\Delta$ is \emph{deterministic}
(i.e., alternative-free), which means that there is no other rule with
the left-hand side $pY$.
From now on, by \emph{restricted PDS} we mean PDS whose
$\varepsilon$-rules are deterministic.

We aim to show that the weak bisimulation problem for restricted PDS
reduces to the (strong) bisimulation problem for first-order grammars
(where silent actions are not allowed by our definition). For this it
is convenient to make a standard transformation~\citep[see,
e.g.,][Sec.~5.6]{harrison78} of our restricted PDS that removes
non-popping $\varepsilon$-rules; an $\varepsilon$-rule
$pY\step{\varepsilon}q\gamma$ is called \emph{popping} if
$\gamma=\varepsilon$.  This is captured by the next proposition.
(When comparing two states from different LTSs, we implicitly refer to
the disjoint union of these LTSs.)

\begin{proposition}\label{prop:restrtopop}
There is a polynomial-time transformation
of a restricted  PDS $M=\tup{Q,\Sigma,\Gamma,\Delta}$ to 
$M'=\tup{Q,\Sigma,\Gamma,\Delta'}$ in which each $\varepsilon$-rule is
deterministic and popping, and $pY$ in $\?L_M$ is weakly bisimilar 
with $pY$ in $\?L_{M'}$.
\end{proposition}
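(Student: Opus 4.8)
The plan is to follow the classical construction that eliminates non-popping $\varepsilon$-rules from a pushdown automaton~\citep[Sec.~5.6]{harrison78}, and to show that, \emph{because the $\varepsilon$-rules are deterministic}, it can be carried out so as to preserve weak bisimilarity and not merely language equivalence. The starting observation is that, since each $\varepsilon$-rule is alternative-free, a pair $pY$ that carries an $\varepsilon$-rule carries no other rule; hence from every configuration $pY\gamma$ there is a \emph{unique} maximal run made of $\varepsilon$-steps, and this run is governed by $pY$ alone until the displayed occurrence of $Y$ (and everything pushed above it) has been popped. Accordingly, for each $pY\in Q\times\Gamma$ I would analyse the unique $\varepsilon$-run $\rho_{pY}$ started from the one-symbol configuration $pY$, and distinguish three exhaustive and mutually exclusive cases: \textbf{(A)} $\rho_{pY}$ empties the stack, reaching a state $q\in Q$, i.e.\ $pY\dstep\varepsilon q$; \textbf{(B)} $\rho_{pY}$ is infinite; \textbf{(C)} $\rho_{pY}$ is finite and ends in an $\varepsilon$-stable configuration $r\sigma$ with $\sigma\neq\varepsilon$, meaning that $rZ$, where $Z$ is the top symbol of $\sigma$, carries no $\varepsilon$-rule.

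These three cases can be separated in polynomial time by standard pushdown reachability ($\mathrm{post}^*$) computations: case~(A) amounts to saturating the partial map $(p,Y)\mapsto q$ defined by $pY\dstep\varepsilon q$; case~(C) amounts to testing whether the $\varepsilon$-closure of $\{pY\}$ meets the regular set of $\varepsilon$-stable configurations, in which case --- by determinism the $\varepsilon$-run is a single path --- that intersection is the singleton $\{r\sigma\}$, from which $r\sigma$ is read off; otherwise we are in case~(B). A short pumping argument then bounds $|\sigma|$ polynomially in $|M|$ in case~(C): the run $\rho_{pY}$ starts at stack height~$1$, and if its final stack had height~$h$ one finds indices $i_2<\dots<i_h$ at which the \emph{last} visit to each of the heights $2,\dots,h$ occurs, each performing a push of at least two symbols and freezing the stack beneath it; should a state/top-symbol pair repeat along this sequence one could pump the corresponding interval and make $\rho_{pY}$ infinite, a contradiction, so $h-1\le|Q|\cdot|\Gamma|$. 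Hence the whole analysis, and the resulting $\Delta'$, have polynomial size.

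The system $M'=\tup{Q,\Sigma,\Gamma,\Delta'}$ keeps every non-$\varepsilon$-rule of $M$ and every popping $\varepsilon$-rule of $M$, and replaces each non-popping $\varepsilon$-rule with left-hand side $pY$ as follows: in case~(A), by the popping $\varepsilon$-rule $pY\step\varepsilon q$; in case~(B), by nothing, so $pY$ becomes a deadlock in $M'$; in case~(C), by the visible rules obtained by inlining the first observable step of $r\sigma$, namely $pY\step a s\delta\sigma'$ for every visible rule $rZ\step a s\delta$ of $\Delta$, where $\sigma=Z\sigma'$. Every $\varepsilon$-rule of $\Delta'$ is popping, distinct $\varepsilon$-rules of $\Delta'$ still have distinct left-hand sides (each newly introduced $\varepsilon$-rule has a left-hand side that carried exactly one rule, itself an $\varepsilon$-rule, in $M$), so $M'$ is again a restricted PDS; and $\Delta'$ is computed from $\Delta$ in polynomial time.

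For correctness I would exhibit a weak bisimulation on the disjoint union of $\?L_M$ and $\?L_{M'}$. Writing $\mathrm{nf}_M(c)$ (resp.\ $\mathrm{nf}_{M'}(c)$) for the outcome of the unique maximal $\varepsilon$-run of a configuration $c$ in $\?L_M$ (resp.\ in $\?L_{M'}$) --- either an $\varepsilon$-stable configuration or, if the run diverges, a symbol~$\mathbf 0$ standing for deadlock --- and $\mathrm{rep}(d)$ for the $M$-configuration that an $M'$-stable configuration $d$ ``stands for'' (namely $r\sigma\tau$ if $d=pY\tau$ with $pY$ of case~(C), identifying deadlocks with $\mathbf 0$, and $d$ itself otherwise), I would relate $c\in\?L_M$ to $c'\in\?L_{M'}$ whenever $\mathrm{nf}_M(c)=\mathrm{rep}(\mathrm{nf}_{M'}(c'))$, and symmetrically. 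The pair $(pY,pY)$ lies in this relation by a direct inspection of the four possibilities for $pY$ in $M$; the zig/zag conditions then reduce to two routine facts: that $\mathrm{nf}_M$ and $\mathrm{nf}_{M'}$ are compatible with the relation up to $\varepsilon$-steps, and that an $M'$-stable case-(C) configuration $pY\tau$ performs, via the inlined rules, exactly the visible moves of the $M$-configuration $r\sigma\tau$ it represents, with \emph{literally identical} successor configurations. I expect the main obstacle to be precisely this last part --- setting up the bridging map $\mathrm{rep}$ correctly and justifying that a silently divergent configuration is weakly bisimilar to a deadlock, which is exactly where determinism of the $\varepsilon$-rules is used (no branching is lost when the silent run is collapsed) --- rather than the construction itself, which is classical.
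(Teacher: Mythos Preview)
Your proposal is correct and follows essentially the same construction as the paper: your cases (A), (B), (C) correspond precisely to the paper's conditions~(\ref{pop2}), (\ref{pop1}), (\ref{pop3}), and the resulting rule set $\Delta'$ is the same. You supply considerably more detail than the paper does---in particular the pumping bound on $|\sigma|$ and the explicit weak bisimulation via $\mathrm{nf}$ and $\mathrm{rep}$---where the paper simply invokes a saturation argument for polynomiality and declares the bisimilarity claim ``straightforward''.
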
	
\begin{proof}
Given a restricted PDS $M=\tup{Q,\Sigma,\Gamma,\Delta}$, we 
proceed as follows.
First we find all $pY$ such that
\begin{align}\label{pop1}
  pY&\step{\varepsilon}\cdots\dstep{\varepsilon}pY\gamma
      \intertext{for some $\gamma\in\Gamma^\ast$, and remove the respective 
      rules $pY\step{\varepsilon}\cdots$.
      Then for each $pY$ such that}\label{pop2}
  pY&\step{\varepsilon}\cdots\step{\varepsilon}\cdots\dstep{\varepsilon}q\;,
      \intertext{we add the popping rule
      $pY\step{\varepsilon}q$, and for each 
      $pY$ where}
 pY&\step{\varepsilon}\cdots\dstep{\varepsilon}qB\gamma\label{pop3}
\end{align}
and each rule $qB\step{a}q'\gamma'$ with $a\in\Sigma$ we add the rule
$pY\step{a}q'\gamma'\gamma$.  Finally we remove all the non-popping
$\varepsilon$-rules.  Thus $M'=\tup{Q,\Sigma,\Gamma,\Delta'}$ arises.
Identifying the configurations that satisfy conditions
(\ref{pop1}--\ref{pop3}) can be performed in polynomial time through a
saturation algorithm.
The claim on the relation of $\?L_M$ and $\?L_{M'}$ is
straightforward.
\end{proof}
A \emph{stable configuration} is either a configuration
$p\varepsilon$, or a configuration $pY\gamma$ where there is no 
$\varepsilon$-rule of the form
$pY\step{\varepsilon}q\gamma'$.
In a
restricted PDS with only popping $\varepsilon$-rules, any unstable
configuration $p\gamma$ only allows to perform a finite sequence of
silent popping steps until it reaches a stable configuration.
It is
natural to restrict our attention to the transitions
$p\gamma\step{a}q\gamma'$ with $a\in\Sigma$ between stable
configurations; such transitions might encompass sequences of
popping $\varepsilon$-steps.

When defining the grammar $\?G_M$, we can avoid the explicit use of
deterministic popping silent steps, by `preprocessing'
them: we apply the inductive definition of the translation
operator~$\?T$ from (\ref{eq-Teps}--\ref{eq-Tx}) to stable
configurations, while if $pY$ is unstable, then there is exactly one
applicable rule,
$pY\step{\varepsilon}q$, and in this case we let
\begin{equation}\label{eq-Tuns}
 \?T(pY\gamma)\eqdef\?T(q\gamma)\;.
\end{equation}
\begin{figure}[tbp]
\centering
\centering\scalebox{.9}{\hspace*{-1pt}%
  \begin{tikzpicture}[auto,on grid]
    \node[square](a) {$A$};
    \node[square,left=.43 of a]{$p$};
    \node[right=.6 of a]{$\step a$};
    \node[square,right=1.6 of a](c){$C$};
    \node[square,left=.43 of c]{$q$};
    \node[square,below=.9 of c](ca){$A$};
    \path[every node/.style={font=\tiny,inner sep=1pt,color=black!70}]
      (c) edge (ca);
    \node[square,below=1.8 of a](A) {$A$};
    \node[square,left=.43 of A]{$q_2$};
    \node[right=.6 of A]{$\step\varepsilon$};
    \node[square,right=1.17 of A](q3){$q_3$};
    \node[lsquare,right=4.2 of a](pa){$pA$};
    \node[lsquare,below left =.9 and .8 of pa](x1){$x_1$}; 
    \node[lsquare,below      =.9        of pa](x2){$x_2$};  
    \node[lsquare,below right=.9 and .8 of pa](x3){$x_3$}; 
    \node[right=1.2 of pa]{$\step a$};
    \node[lsquare,right=2.6 of pa](qc){$qC$};
    \node[lsquare,below left =.9 and .8 of qc](q1a){$q_1A$};  
    \node[lsquare,below right=.9 and .8 of qc](q3a){$q_3A$}; 
    \node[lsquare,below left =1.8 and .8 of qc](x1a){$x_1$};  
    \node[lsquare,below      =1.8        of qc](x2a){$x_2$};  
    \node[lsquare,below right=1.8 and .8 of qc](x3a){$x_3$};   
    \path[every node/.style={font=\tiny,inner sep=1pt,color=black!70}]
      (pa) edge (x1)
      (pa) edge (x2)
      (pa) edge (x3)
      (qc) edge (q1a)
      (qc) edge[bend right=15] (x3a)
      (qc) edge (q3a)
      (q1a) edge (x1a)
      (q1a) edge (x2a)
      (q1a) edge (x3a)
      (q3a) edge (x1a)
      (q3a) edge (x2a)
      (q3a) edge (x3a);
    \node[right=.9 of ca]{\Large$\rightsquigarrow$};
  \end{tikzpicture}}
	\caption{Deterministic popping silent steps are
	`preprocessed.'}\label{fig:swalloweps}
\end{figure}
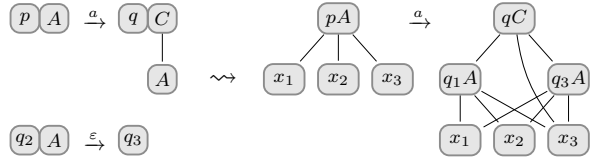%
\Cref{fig:swalloweps} (right) shows the grammar-rule
\begin{equation*}
 \?T(pAx)\step{a}\?T(qCAx)
\end{equation*}
(arising from the PDS-rule
$pA\step{a}qCA$), when $Q=\{q_1,q_2,q_3\}$ and 
there is a PDS-rule $q_2A\step{\varepsilon}q_3$, while $q_1A$, $q_3A$
are stable.

\begin{corollary}\label{cor-e-pds}
	The weak bisimulation problem for restricted pushdown systems (i.e.,
	where $\varepsilon$-rules are deterministic) is in~$\ACK$.
\end{corollary}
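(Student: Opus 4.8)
The plan is to reduce the weak bisimulation problem for restricted pushdown systems to the bisimulation problem for first-order grammars, and then invoke \cref{th-upb}. The work has essentially been set up by the preceding discussion: \cref{prop:restrtopop} gives a polynomial-time transformation of an input restricted PDS $M$ into an equivalent restricted PDS $M'$ whose $\varepsilon$-rules are all deterministic \emph{and popping}, preserving weak bisimilarity of $pY$-configurations. So I may assume from the outset that $M=\tup{Q,\Sigma,\Gamma,\Delta}$ has only deterministic popping $\varepsilon$-rules.

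First I would define the grammar $\?G_M=\tup{\N,\Sigma,\ru}$ exactly as in the real-time case, with $\N\eqdef Q\cup(Q\times\Gamma)$, arities $\ar(q)\eqdef 0$ and $\ar((q,X))\eqdef m$ where $m\eqdef|Q|$, but using the modified translation operator~$\?T$ from (\ref{eq-Teps}--\ref{eq-Tx}) \emph{together with}~\eqref{eq-Tuns}: an unstable configuration $pY\gamma$, for which the unique applicable $\varepsilon$-rule is $pY\step\varepsilon q$, is translated as $\?T(pY\gamma)\eqdef\?T(q\gamma)$, thereby `preprocessing' the deterministic popping silent step. Since each chain of popping $\varepsilon$-steps from an unstable configuration terminates in a stable configuration after at most~$|\Gamma|\cdot|Q|$ steps (one cannot revisit a control-state/top-symbol pair without a cycle, which \cref{prop:restrtopop} already ruled out, or more simply: the stack strictly shrinks), the recursion defining~$\?T$ on~\eqref{eq-Tuns} is well-founded and~$\?T$ is computable in polynomial time. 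The rules of~$\ru$ are then obtained, as in~\eqref{eq-ru}, by translating each visible PDS-rule $pY\step a q\gamma$ with $a\in\Sigma$ between the relevant configurations into $\?T(pYx)\step a\?T(q\gamma x)$; note that $\?T(q\gamma x)$ automatically collapses any popping $\varepsilon$-steps enabled at the top of $q\gamma$, so the grammar never needs silent actions.

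Next I would argue the key correctness claim: for all configurations $p\gamma,q\delta$ of~$M$, $p\gamma\approx q\delta$ in~$\?L_M$ if and only if $\?T(p\gamma)\sim\?T(q\delta)$ in~$\?L_{\?G_M}$. The forward direction follows because $\?T$ maps a configuration to the term describing its \emph{stable} behaviour, so a $\dstep a$ move of~$M$ (a block of popping $\varepsilon$-steps, then a visible $a$, then another block of popping $\varepsilon$-steps) corresponds to a single $\step a$ move in~$\?L_{\?G_M}$; conversely any $\step a$ move $\?T(p\gamma)\step a F$ of the grammar has $F=\?T(q\delta)$ for some $q\delta$ with $p\gamma\dstep a q\delta$ in~$\?L_M$, using the closure property of the image of~$\?T$ under reachability, exactly as in the real-time paragraph. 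One then checks that $\{(\?T(p\gamma),\?T(q\delta))\mid p\gamma\approx q\delta\}$ is a strong bisimulation on~$\?L_{\?G_M}$ and, symmetrically, that the preimage of~$\sim$ is a weak bisimulation on~$\?L_M$; the determinism of the $\varepsilon$-rules is what makes the silent moves `invisible' on the grammar side and is essential here. Combined with \cref{prop:restrtopop}, this gives a polynomial-time (indeed \LOGSPACE) reduction from weak bisimilarity of restricted PDS to bisimilarity of first-order grammars, and \cref{th-upb} then places the problem in~$\ACK$.

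The main obstacle is the correctness of the preprocessing~\eqref{eq-Tuns}: one must verify carefully that collapsing deterministic popping $\varepsilon$-steps inside~$\?T$ neither identifies configurations that are weakly distinguishable nor separates configurations that are weakly bisimilar, and that it interacts correctly with the grammar's head-rewriting semantics when such steps become enabled only after a visible move pops part of the stack (this is precisely the situation illustrated in \cref{fig:swalloweps}). Handling this cleanly requires the normal form from \cref{prop:restrtopop} so that all silent steps are popping, and then a routine but careful bisimulation-up-to argument; everything else is bookkeeping that parallels the real-time construction leading to \cref{cor-pds}.
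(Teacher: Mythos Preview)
Your proposal is correct and follows essentially the same approach as the paper: reduce via \cref{prop:restrtopop} to deterministic popping $\varepsilon$-rules, build $\?G_M$ with the preprocessed translation~$\?T$ from (\ref{eq-Teps}--\ref{eq-Tx}) and~\eqref{eq-Tuns}, observe that $pY\approx qZ$ in~$\?L_M$ iff $\?T(pY)\sim\?T(qZ)$ in~$\?L_{\?G_M}$, and invoke \cref{th-upb}. The paper's proof is a three-line sketch that declares the equivalence ``clear''; you have simply unpacked that claim (and one minor remark: the termination of the $\varepsilon$-chain in~\eqref{eq-Tuns} is immediate because popping steps strictly shorten the stack, so the bound is the stack length rather than $|\Gamma|\cdot|Q|$).
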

\begin{proof}
	By Proposition~\ref{prop:restrtopop} it suffices to consider 
	a PDS $M=(Q,\Sigma,\Gamma,\Delta)$ where each
	$\varepsilon$-rule is deterministic and popping.
Since it is clear that $pY\approx qZ$ in $\?L_M$ iff 
$\?T(pY)\sim \?T(qZ)$ in $\?L_{\?G_M}$, the claim follows from
 \cref{th-upb}.
\end{proof}

Note that, due to our preprocessing, the terms~$\?T(p\gamma)$ may
have branches of varying lengths, which is why~$n$ as defined
in~\eqref{eq-n} might not be bounded by the number of states as in
\cref{cor-pds}.

\subsection{From First-Order Grammars to PDS}

We have shown the $\ACK$-membership for bisimilarity of first-order
grammars (\cref{th-upb}), and thus also for weak bisimilarity of
pushdown processes with deterministic $\varepsilon$-steps
(\cref{cor-e-pds}).  By adding the lower bound from~\cite{jancarhard},
we get the $\ACK$-completeness for both problems.

In fact, the $\ACK$-hardness in~\cite{jancarhard} was shown in the
framework of first-order grammars. The case of pushdown processes was
handled by a general reference to the equivalences that are known,
e.g., from~\cite{CourcelleHandbook} and the works referred there;
another relevant reference for such equivalences is~\cite{caucal95}.
Nevertheless, in our context it seems more appropriate to show a
direct transformation from first-order grammars to pushdown processes
(with deterministic $\varepsilon$-steps), which can be argued to be
primitive-recursive;
in fact, it is 
a \ComplexityFont{logspace} reduction.

\medskip 
Let $\?G=\tup{\N,\Sigma,\ru}$ be a first-order grammar.  For
a term $F\in\terms$ such that $F\not\in\X$ (hence the root of $F$ is a
nonterminal $A$) we define its \emph{root-substitution} to be the
substitution $\sigma$ where $F=A(x_1,\dots,x_{\ar(A)})\sigma$ and
$x\sigma=x$ for all $x\not\in\{x_1,\dots,x_{\ar(A)}\}$.  A
substitution $\sigma$ is an \emph{rhs-substitution} for~$\?G$ if it is
the root-substitution of a subterm $F$ of the right-hand side $E$ of a
rule $A(x_1,\dots,x_{\ar(A)})\step{a}E$ in $\ru$ (where $F\not\in\X$);
we let $\textsc{RSubs}_\?G$ denote the set of
rhs-substitutions for~$\?G$.

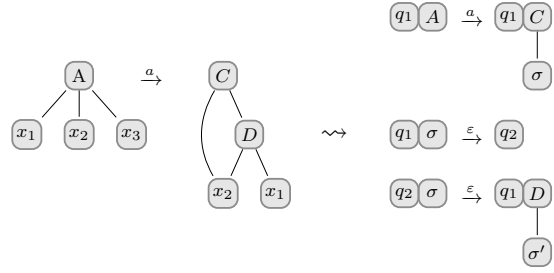
\begin{figure}[!t]
  \centering\scalebox{.87}{
  \begin{tikzpicture}[auto,on grid]
    \node[square](A){A};
    \node[square,below left =.9 and .8 of A](x1){$x_1$}; 
    \node[square,below      =.9        of A](x2){$x_2$};  
    \node[square,below right=.9 and .8 of A](x3){$x_3$}; 
    \node[right=1.1 of A]{$\step a$};
    \node[square,right=2.2 of A](C){$C$};
    \node[square,below right=.9 and .4 of C](D){$D$}; 
    \node[square,below left =.9 and .4 of D](C1){$x_2$}; 
    \node[square,below right=.9 and .4 of D](D2){$x_1$};
    \path[every node/.style={font=\tiny,inner sep=1pt,color=black!70}]
      (A) edge (x1)
      (A) edge (x2)
      (A) edge (x3)
      (C) edge[bend right] (C1)
      (C) edge (D)
      (D) edge (C1)
      (D) edge (D2);
    \node[right=1.3 of D]{\Large$\rightsquigarrow$};
    \node[square,above right=.9 and 3.2 of C](a){$A$};
    \node[square,left=.43 of a]{$q_1$};
    \node[right=.6 of a]{$\step a$};
    \node[square,right=1.6 of a](c){$C$};
    \node[square,left=.43 of c]{$q_1$};
    \node[square,below=.9 of c](sub){$\sigma$};
    \node[square, below=1.8 of a](s){$\sigma$};
    \node[square,left=.43 of s]{$q_1$};
    \node[right=.6 of s]{$\step\varepsilon$};
    \node[square,right=1.17 of s](q2){$q_2$};
    \node[square, below=.9 of s](s2){$\sigma$};
    \node[square,left=.43 of s2]{$q_2$};
    \node[right=.6 of s2]{$\step\varepsilon$};
    \node[square,right=1.6 of s2](d){$D$};
    \node[square,left=.43 of d]{$q_1$};
    \node[square,below=.9 of d](sub2){$\sigma'$};
    \path[every node/.style={font=\tiny,inner sep=1pt,color=black!70}]
      (c) edge (sub)
      (d) edge (sub2);
  \end{tikzpicture}}
  \caption{\label{fig-fog2pda}The transformation from first-order
    grammars to pushdown processes with deterministic
    $\varepsilon$-steps.  In this example,
$x_1\sigma=x_2, x_2\sigma=D(x_2,x_1)$, 
and $x_1\sigma'=x_2, x_2\sigma'=x_1$.}
\end{figure}

We define the PDS
  $M_{\?G}\eqdef(Q,\Sigma,\Gamma,\Delta)$
where
\ifams%
\begin{align*}
  Q&\eqdef\{q_1,\dots,q_m\}
\intertext{for $m$ as defined in~\eqref{eq-m}---or
     $Q\eqdef\{q_1\}$ if $m=0$---,} 
 \Gamma&\eqdef \N\uplus\textsc{RSubs}_\?G\;,\\
  \Delta&\eqdef\{q_1A\step a q_i\mid
          (A(x_1,\dots,x_{\ar(A)})\step{a}x_i)\in\ru\}\\
        &\,\cup\,\{q_1A\step{a}q_1B\sigma\mid \sigma\in\textsc{RSubs}_\?G\wedge(A(x_1,\dots,x_{\ar(A)})\step{a}B(x_1,\dots,x_{\ar(B)})\sigma)\in\ru\}\\
        &\,\cup\,\{q_i\sigma\step\varepsilon
          q_j\mid 1\leq i\leq
          m\wedge\sigma\in\textsc{RSubs}_\?G\wedge\sigma(x_i)=x_j\}\\
        &\,\cup\,\{q_i\sigma\step{\varepsilon}q_1C\sigma'\mid 1\leq i\leq
          m\wedge\sigma,\sigma'\in\textsc{RSubs}_\?G\wedge\sigma(x_i)=C(x_1,\dots,x_{\ar(C)})\sigma'\}\;.
\end{align*}\else%
\begin{align*}
  Q&\eqdef\{q_1,\dots,q_m\}
\intertext{for $m$ as defined in~\eqref{eq-m}---or
     $Q\eqdef\{q_1\}$ if $m=0$---,} 
 \Gamma&\eqdef \N\uplus\textsc{RSubs}_\?G\;,\\
  \Delta&\eqdef\{q_1A\step a q_i\mid
          (A(x_1,\dots,x_{\ar(A)})\step{a}x_i)\in\ru\}\\
        &\,\cup\,\{q_1A\step{a}q_1B\sigma\mid \sigma\in\textsc{RSubs}_\?G\:\wedge\\
          &\qquad\qquad(A(x_1,\dots,x_{\ar(A)})\step{a}B(x_1,\dots,x_{\ar(B)})\sigma)\in\ru\}\\
        &\,\cup\,\{q_i\sigma\step\varepsilon
          q_j\mid 1\leq i\leq
          m\wedge\sigma\in\textsc{RSubs}_\?G\wedge\sigma(x_i)=x_j\}\\
        &\,\cup\,\{q_i\sigma\step{\varepsilon}q_1C\sigma'\mid 1\leq i\leq
          m\wedge\sigma,\sigma'\in\textsc{RSubs}_\?G\:\wedge\\
        &\qquad\qquad\sigma(x_i)=C(x_1,\dots,x_{\ar(C)})\sigma'\}\;.
\end{align*}\fi%
See \cref{fig-fog2pda} for an example.  Note that the $\varepsilon$-rules
are indeed deterministic; moreover, any non-popping
$\varepsilon$-step, hence of the form
$q_i\sigma\gamma\step{\varepsilon}q_1C\sigma'\gamma$, cannot be
followed by another $\varepsilon$-step.

It should be obvious that a state $A(x_1,\dots,x_{\ar(A)})$ in $\lts$
is weakly bisimilar with the state $q_1A$ in $\?L_{M_{\?G}}$.
In particular we note that $q_1A\dstep{w}q_i\gamma$ in $\?L_{M_{\?G}}$ 
(where also $\varepsilon$-steps might be comprised) 
entails 
that $\gamma=\sigma_0\sigma_1\dots\sigma_\ell$
(in which case $q_i\gamma$ represents the term $x_i\sigma_0\sigma_1,\dots\sigma_\ell$), or 
 $\gamma=B\sigma_1\dots\sigma_\ell$ when $i=1$
 (in which case $q_1\gamma$ represents the term
 $B(x_1,\dots,x_{\ar(B)})\sigma_1,\dots\sigma_\ell$).

We could add a technical discussion about how to represent all the
terms from~$\terms$ (including the infinite regular terms) in an
enhanced version of~$\?L_{M_{\?G}}$, but this is not necessary since
the lower bound construction in~\cite{jancarhard} uses only the
states of~$\lts$ that are reachable from `initial' terms of the
form~$A(x_1,\dots,x_{\ar(A)})$ (more precisely, of the form
$A(\bot,\dots,\bot)$ for a nullary nonterminal~$\bot$).

\begin{corollary}\label{cor:pdshard}
The weak bisimulation problem for 
 pushdown systems whose $\varepsilon$-rules are deterministic and
 popping is~$\ACK$-hard.
\end{corollary}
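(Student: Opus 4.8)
The plan is to obtain $\ACK$-hardness by composing the $\ACK$-hardness of bisimilarity of first-order grammars from~\cite{jancarhard} with the two logspace transformations developed in this section. Recall that~\cite{jancarhard} shows the bisimulation problem for first-order grammars to be $\ACK$-hard, and — as noted just above — its hard instances only use states reachable from initial terms of the form $A(x_1,\dots,x_{\ar(A)})$; hence such an instance asks, given a grammar $\?G=\tup{\N,\Sigma,\ru}$ and nonterminals $A,B\in\N$, whether $A(x_1,\dots,x_{\ar(A)})\sim B(x_1,\dots,x_{\ar(B)})$ in $\lts$ (the variant using terms $A(\bot,\dots,\bot)$ for a dead nullary nonterminal $\bot$ being handled identically, $q_1A$ being replaced by the configuration representing that term). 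These are exactly the instances the PDS $M_{\?G}$ of the preceding construction captures, and $M_{\?G}$ is logspace-computable from~$\?G$.

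First I would use the weak bisimilarity of $A(x_1,\dots,x_{\ar(A)})$ in $\lts$ with $q_1A$ in $\?L_{M_{\?G}}$ recorded just before the statement: because $\lts$ carries no $\varepsilon$-transitions (so that $\sim$ and $\approx$ coincide on its states) and $\approx$ is a transitive equivalence, this gives that $A(x_1,\dots,x_{\ar(A)})\sim B(x_1,\dots,x_{\ar(B)})$ in $\lts$ iff $q_1A\approx q_1B$ in $\?L_{M_{\?G}}$. The PDS $M_{\?G}$ may still contain non-popping $\varepsilon$-rules $q_i\sigma\step\varepsilon q_1C\sigma'$, but it is a restricted PDS (each $q_i\sigma$ has exactly one $\varepsilon$-rule, determined by $\sigma(x_i)$), so I would then apply \cref{prop:restrtopop} to get in polynomial time a PDS $M_{\?G}'$ with the same control states, whose $\varepsilon$-rules are all deterministic and popping, and with $q_1A$ in $\?L_{M_{\?G}}$ weakly bisimilar to $q_1A$ in $\?L_{M_{\?G}'}$. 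Composing, $A(x_1,\dots,x_{\ar(A)})\sim B(x_1,\dots,x_{\ar(B)})$ in $\lts$ iff $q_1A\approx q_1B$ in $\?L_{M_{\?G}'}$, a PDS of the required shape; since every transformation is logspace (a fortiori primitive-recursive), the $\ACK$-hardness transfers, which is the claim.

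The only step requiring genuine work is the weak bisimilarity $A(x_1,\dots,x_{\ar(A)})\approx q_1A$. To prove it I would exhibit an explicit weak bisimulation $R$ on the disjoint union of the two transition systems, using the representation sketched before the statement: relate a term $F$ to $q_i\gamma$ whenever $\gamma=\sigma_0\sigma_1\cdots\sigma_\ell$ with $F=x_i\sigma_0\sigma_1\cdots\sigma_\ell$, and to $q_1\gamma$ whenever $\gamma=B\sigma_1\cdots\sigma_\ell$ with $F=B(x_1,\dots,x_{\ar(B)})\sigma_1\cdots\sigma_\ell$, additionally relating each dead variable $x_i$ to $q_i$ and to every unstable configuration that silently pops to $q_i$. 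Checking that $R$ is a weak bisimulation splits into two symmetric matching arguments. A head step $A(x_1,\dots,x_{\ar(A)})\tau\step a E\tau$, coming from a rule $A(x_1,\dots,x_{\ar(A)})\step a E$, is matched out of the related configuration $q_1A\gamma$ (with the stack $\gamma$ encoding $\tau$) by the PDS move dictated by the root of $E$ — a step to a $q_i$-headed configuration if $E=x_i$, or to a $q_1B$-headed configuration if $E$ has root $B$ — followed by the forced, terminating (possibly empty) run of $\varepsilon$-steps down the stack until a stable configuration is reached, which, as one checks by induction on its length, represents exactly $E\tau$; conversely, a visible PDS move followed by its maximal $\varepsilon$-run is matched by the single corresponding head step, the already recorded observation that a non-popping $\varepsilon$-step cannot be immediately followed by another $\varepsilon$-step keeping these runs finite and canonical. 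I expect the real obstacle to be the bookkeeping of the nested substitutions stacked in $\gamma$, i.e.\ arguing that lazily composing the stored root-substitutions computes the same term as the eager substitution of the head-rewriting semantics; but this is a routine structural induction, made straightforward by the fact that variables are dead in both systems, so that nothing is ever rewritten below an exposed substitution.
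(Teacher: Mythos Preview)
Your proposal is correct and follows the same route as the paper---reducing from first-order grammar bisimilarity~\cite{jancarhard} via the translation $\?G\mapsto M_{\?G}$---and is in fact more explicit than the paper's own proof, which stops at $M_{\?G_R}$ without spelling out the final application of \cref{prop:restrtopop} needed to make all $\varepsilon$-rules popping. One harmless slip: you twice call the composed reduction ``logspace'' while \cref{prop:restrtopop} only guarantees polynomial time; as you yourself note parenthetically, primitive-recursive reductions suffice for transferring $\ACK$-hardness, so correctness is unaffected.
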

\begin{proof}
In~\citep{jancarhard}, the
\ACK-hardness of the control-state reachability problem for reset
counter machines is recalled~\citep{schnoebelen10}, and its
polynomial-time
(in fact, \ComplexityFont{logspace}) reduction to the bisimulation problem for
first-order grammars is shown.
The reduction guarantees that a given control state is reachable from the initial
configuration of a given reset counter machine $R$ iff 
$A(\bot,\dots,\bot)\not\sim B(\bot,\dots,\bot)$
in $\?L_{\?G_R}$ for the constructed grammar $\?G_R$.
As shown above, the question whether $A(\bot,\dots,\bot)\sim B(\bot,\dots,\bot)$ 
in $\?L_{\?G_R}$
can be
further reduced to an instance of the weak bisimulation problem for
the pushdown system $M_{\?G_{R}}$. 
\end{proof}

\section{Concluding Remarks}
\label{sec-concl}
\Cref{th-upb,cor-e-pds} provide the first known worst-case upper
bounds, in \ACK, for the strong bisimulation equivalence of
first-order grammars and the weak bisimulation equivalence of pushdown
processes restricted to 
deterministic silent steps.  By the lower
bound shown in~\citep{jancarhard} and \cref{cor:pdshard}, this is moreover optimal.  An
obvious remaining problem is to close the complexity gap in the case of
strong bisimulation for real-time pushdown processes, which is only
known to be \TOWER-hard~\citep{benedikt13}, and for which we do not
expect \cref{cor-pds} to provide tight upper bounds.

\appendix
\ifams\section{Grammatical Constants}\fi
\label{app-cst}

\renewcommand{\eqby}[1]{=}
\begin{table*}[!t]
  \caption{\ifams Grammatical constants defined
    in~\citep{jancar18}.\else Grammatical Constants Defined in~\citep{jancar18}\fi}
  \label{tab-cst}
  \centering\ifams\footnotesize\setlength{\tabcolsep}{2pt}\fi
  \begin{tabular}{r@{$\;\eqdef\;$}lccl}
    \toprule
    \multicolumn{2}{l}{Constant} & Ref.\ in~\citep{jancar18} &
    Ref.\ \ifams here\else in this paper\fi & Growth in~$|\?G|$\\
    \midrule
    $m$&$\max_{A\in\N}\ar(A)$  & (7) & \eqref{eq-m} & linear\\
    $\hinc$&$\max_{E\in\rhs}\height{E}-1$ & (4) & \eqref{eq-hinc} & linear\\
    $\sinc$&$\max_{E\in\rhs}\ntsize{E}$ & (5) & \eqref{eq-sinc} & linear\\
    $d_0$&$ 1+\max_{A\in\N,1\leq i\leq\ar(A)}|w_{[A,i]}|$ & (6) &
    \eqref{eq-d0} & exponential\\
    $d_1$&$ 2|\N|(\max\{d_0,|\ru|^{d_0}\})^{m+2}$ & (13) & & doubly
    exponential\\
    $d_2$&$ d_0+(1+d_0\hinc)(d_0-1)$ & (19) & & exponential\\
    $d_3$&$(\max\{d_0,|\ru|^{d_0}\})^2$ & (21) && doubly
    exponential\\
    $n$&$m^{d_0}$ & (24) &
    \eqref{eq-n} & doubly exponential\\
    $s$&$ m^{d_0+1}+(m+2)d_0\sinc+(d_2+d_0-1)\sinc$ & (25) & &
    doubly exponential\\
    $g$&$(d_2+d_0-1)\sinc$ & (26) & & exponential\\
    $d_4$&$d_1(1+\sum_{E\in\rhs}\ntsize{E})^{d_2+d_0-1}$ & (23) & &
    doubly exponential\\
    $d_5$&$(d_2+d_0-1)(1+(d_0-1)\hinc)$ & (31) && doubly exponential\\
    $c$&$\max\{d_3,2d_4d_5\}$ & (38) && doubly exponential\\
    \bottomrule
  \end{tabular}
\end{table*}

The proof of \cref{th-el} in~\citep[Thm.~7]{jancar18} relies on the
definition of several grammatical constants, which depend solely on the given
first-order grammar~$\?G=\tup{\N,\Sigma,\ru}$.  
In \cref{tab-cst} we summarise their definitions as a reference for the reader.

\section*{Acknowledgements}
\addcontentsline{toc}{section}{Acknowledgements} 
{P. Jan\v{c}ar acknowledges the support of the 
 Grant Agency of Czech Rep., GA\v{C}R 18-11193S;
 part of this research 
 was conducted while he held an invited professorship
 at ENS Paris-Saclay.
S. Schmitz is partially funded by
  \href{http://bravas.labri.fr/}{ANR-17-CE40-0028~\textsc{Bra\!VAS}}.}

\newcommand{\noopsort}[2]{#2}\newcommand{\complexity}[1]{{\smaller\textsf{#1}}}


\begin{thebibliography}{39}
\providecommand{\natexlab}[1]{#1}
\providecommand{\url}[1]{\texttt{#1}}
\expandafter\ifx\csname urlstyle\endcsname\relax
  \providecommand{\doi}[1]{doi: #1}\else
  \providecommand{\doi}{doi: \begingroup \urlstyle{rm}\Url}\fi

\bibitem[Benedikt et~al.(2013)Benedikt, G\"oller, Kiefer, and
  Murawski]{benedikt13}
M.~Benedikt, S.~G\"oller, S.~Kiefer, and A.~S. Murawski.
\newblock Bisimilarity of pushdown automata is nonelementary.
\newblock In \emph{Proc.\ LICS'13}, pages 488--498. IEEE, 2013.
\newblock \doi{10.1109/LICS.2013.55}.

\bibitem[{\noopsort{Benthem}{van}}~Benthem(1975)]{vanbenthem75}
J.~{\noopsort{Benthem}{van}}~Benthem.
\newblock \emph{Modal Correspondence Theory}.
\newblock PhD thesis, Mathematisch Instituut \& Instituut voor
  Grondslagenonderzoek, University of Amsterdam, 1975.

\bibitem[B\"ohm et~al.(2014)B\"ohm, G\"oller, and Jan\v{c}ar]{bohm14}
S.~B\"ohm, S.~G\"oller, and P.~Jan\v{c}ar.
\newblock Bisimulation equivalence and regularity for real-time one-counter
  automata.
\newblock \emph{J.~Comput. Syst. Sci.}, 80\penalty0 (4):\penalty0 720--743,
  2014.
\newblock \doi{10.1016/j.jcss.2013.11.003}.

\bibitem[Broadbent and G{\"o}ller(2012)]{broadbent12}
C.~Broadbent and S.~G{\"o}ller.
\newblock On bisimilarity of higher-order pushdown automata: {U}ndecidability
  at order two.
\newblock In \emph{Proc.\ FSTTCS'12}, volume~18 of \emph{Leibniz Int. Proc.
  Inf.}, pages 160--172. LZI, 2012.
\newblock \doi{10.4230/LIPIcs.FSTTCS.2012.160}.

\bibitem[Burkart et~al.(1995)Burkart, Caucal, and Steffen]{burkart95}
O.~Burkart, D.~Caucal, and B.~Steffen.
\newblock An elementary bisimulation decision procedure for arbitrary
  context-free processes.
\newblock In \emph{Proc.\ MFCS'95}, volume 969 of \emph{Lect. Notes in Comput.
  Sci.}, pages 423--433. Springer, 1995.
\newblock \doi{10.1007/3-540-60246-1_148}.

\bibitem[Caucal(1992)]{caucal92}
D.~Caucal.
\newblock Monadic theory of term rewritings.
\newblock In \emph{Proc.\ LICS'92}, pages 266--273. IEEE, 1992.
\newblock \doi{10.1109/LICS.1992.185539}.

\bibitem[Caucal(1995)]{caucal95}
D.~Caucal.
\newblock Bisimulation of context-free grammars and pushdown automata.
\newblock In A.~Ponse, M.~de~Rijke, and Y.~Venema, editors, \emph{Modal Logic
  and Process Algebra: {A} Bisimulation Perspective}, volume~53 of \emph{CSLI
  Lecture Notes}, chapter~5, pages 85--106. CSLI Publications, 1995.

\bibitem[Cicho\'n and {Tahhan Bittar}(1998)]{cichon98}
E.~A. Cicho\'n and E.~{Tahhan Bittar}.
\newblock Ordinal recursive bounds for {Higman}'s {T}heorem.
\newblock \emph{Theor. Comput. Sci.}, 201\penalty0 (1--2):\penalty0 63--84,
  1998.
\newblock \doi{10.1016/S0304-3975(97)00009-1}.

\bibitem[Courcelle(1990)]{CourcelleHandbook}
B.~Courcelle.
\newblock Recursive applicative program schemes.
\newblock In \emph{Handbook of Theoretical Computer Science, vol. B},
  chapter~9, pages 459--492. MIT Press, 1990.
\newblock \doi{10.1016/B978-0-444-88074-1.50014-7}.

\bibitem[Czerwi\'nski and Lasota(2010)]{czerwinski10}
W.~Czerwi\'nski and S.~Lasota.
\newblock Fast equivalence-checking for normed context-free processes.
\newblock In \emph{Proc.\ FSTTCS'10}, volume~8 of \emph{Leibniz Int. Proc.
  Inf.}, pages 260--271. LZI, 2010.
\newblock \doi{10.4230/LIPIcs.FSTTCS.2010.260}.

\bibitem[{\noopsort{Glabbeek}{van}}~Glabbeek(2001)]{vanglabbeek01}
R.~J. {\noopsort{Glabbeek}{van}}~Glabbeek.
\newblock The linear time --- branching time spectrum~{I}. {T}he semantics of
  concrete, sequential processes.
\newblock In \emph{Handbook of Process Algebra}, chapter~1, pages 3--99. 2001.
\newblock \doi{10.1016/B978-044482830-9/50019-9}.

\bibitem[Harrison(1978)]{harrison78}
M.~A. Harrison.
\newblock \emph{Introduction to Formal Language Theory}.
\newblock Addison-Wesley, 1978.

\bibitem[Hirshfeld et~al.(1996)Hirshfeld, Jerrum, and Moller]{hirshfeld96}
Y.~Hirshfeld, M.~Jerrum, and F.~Moller.
\newblock A polynomial algorithm for deciding bisimilarity of normed
  context-free processes.
\newblock \emph{Theor. Comput. Sci.}, 158\penalty0 (1--2):\penalty0 143--159,
  1996.
\newblock \doi{10.1016/0304-3975(95)00064-X}.

\bibitem[Janin and Walukiewicz(1996)]{janin96}
D.~Janin and I.~Walukiewicz.
\newblock On the expressive completeness of the propositional mu-calculus with
  respect to monadic second order logic.
\newblock In \emph{Proc.\ Concur'96}, volume 1119 of \emph{Lect. Notes in
  Comput. Sci.}, pages 263--277. Springer, 1996.
\newblock \doi{10.1007/3-540-61604-7_60}.

\bibitem[Jan\v{c}ar(2012)]{jancar12}
P.~Jan\v{c}ar.
\newblock Decidability of {DPDA} language equivalence via first-order grammars.
\newblock In \emph{Proc.\ LICS'12}, pages 415--424. IEEE, 2012.
\newblock \doi{10.1109/LICS.2012.51}.

\bibitem[Jan\v{c}ar(2013)]{jancar13}
P.~Jan\v{c}ar.
\newblock Bisimilarity on basic process algebra is in {{\complexity{2EXPTIME}}}
  (an explicit proof).
\newblock \emph{Logic. Meth. in Comput. Sci.}, 9\penalty0 (1):\penalty0
  10:1--10:19, 2013.
\newblock \doi{10.2168/LMCS-9(1:10)2013}.

\bibitem[Jan\v{c}ar(2014{\natexlab{a}})]{jancar14}
P.~Jan\v{c}ar.
\newblock Bisimulation equivalence of first-order grammars.
\newblock In \emph{Proc.\ ICALP'14}, volume 8573 of \emph{Lect. Notes in
  Comput. Sci.}, pages 232--243. Springer, 2014{\natexlab{a}}.
\newblock \doi{10.1007/978-3-662-43951-7_20}.

\bibitem[Jan\v{c}ar(2014{\natexlab{b}})]{jancarhard}
P.~Jan\v{c}ar.
\newblock Equivalences of pushdown systems are hard.
\newblock In \emph{Proc.\ FoSSaCS'14}, volume 8412 of \emph{Lect. Notes in
  Comput. Sci.}, pages 1--28. Springer, 2014{\natexlab{b}}.
\newblock \doi{10.1007/978-3-642-54830-7_1}.

\bibitem[Jan\v{c}ar(2016)]{jancar16}
P.~Jan\v{c}ar.
\newblock Deciding semantic finiteness of pushdown processes and first-order
  grammars w.r.t.\ bisimulation equivalence.
\newblock In \emph{Proc.\ MFCS'16}, volume~58 of \emph{Leibniz Int. Proc.
  Inf.}, pages 52:1--52:13. LZI, 2016.
\newblock \doi{10.4230/LIPIcs.MFCS.2016.52}.
\newblock Full version available as
  \href{https://arxiv.org/abs/1305.0516}{arXiv:1305.0516 [cs.LO]}.

\bibitem[Jan\v{c}ar(2018)]{jancar18}
P.~Jan\v{c}ar.
\newblock Equivalence of pushdown automata via first-order grammars.
\newblock Preprint, \href{https://arxiv.org/abs/1812.03518}{arXiv:1812.03518
  [cs.LO]}, 2018.
\newblock submitted to a journal.

\bibitem[Jan\v{c}ar and Srba(2008)]{jancar08}
P.~Jan\v{c}ar and J.~Srba.
\newblock Undecidability of bisimilarity by {D}efender's forcing.
\newblock \emph{J.~ACM}, 55\penalty0 (1):\penalty0 5:1--5:26, 2008.
\newblock \doi{10.1145/1326554.1326559}.

\bibitem[Kiefer(2013)]{kiefer13}
S.~Kiefer.
\newblock {BPA} bisimilarity is {{\complexity{EXPTIME}}}-hard.
\newblock \emph{Inform. Proc. Letters}, 113\penalty0 (4):\penalty0 101--106,
  2013.
\newblock \doi{10.1016/j.ipl.2012.12.004}.

\bibitem[L\"ob and Wainer(1970)]{lob70}
M.~H. L\"ob and S.~S. Wainer.
\newblock Hierarchies of number theoretic functions, {I}.
\newblock \emph{Arch. Math. Logic}, 13:\penalty0 39--51, 1970.
\newblock \doi{10.1007/BF01967649}.

\bibitem[Mayr(2003)]{mayr03}
R.~Mayr.
\newblock Undecidability of weak bisimulation equivalence for 1-counter
  processes.
\newblock In \emph{Proc.\ ICALP'03}, volume 2719 of \emph{Lect. Notes in
  Comput. Sci.}, pages 570--583. Springer, 2003.
\newblock \doi{10.1007/3-540-45061-0_46}.

\bibitem[Milner(1980)]{milner80}
R.~Milner.
\newblock \emph{A Calculus of Communicating Systems}, volume~92 of \emph{Lect.
  Notes in Comput. Sci.}
\newblock Springer, 1980.
\newblock \doi{10.1007/3-540-10235-3}.

\bibitem[Park(1981)]{park81}
D.~M.~R. Park.
\newblock Concurrency and automata on infinite sequences.
\newblock In \emph{Proc.\ GI TCS'81}, volume 104 of \emph{Lect. Notes in
  Comput. Sci.}, pages 167--183. Springer, 1981.
\newblock \doi{https://doi.org/10.1007/BFb0017309}.

\bibitem[{P}h. Schnoebelen(2010)]{schnoebelen10}
{P}h. Schnoebelen.
\newblock Revisiting {A}ckermann-hardness for lossy counter machines and reset
  {P}etri nets.
\newblock In \emph{Proc.\ MFCS'10}, volume 6281 of \emph{Lect. Notes in Comput.
  Sci.}, pages 616--628. Springer, 2010.
\newblock \doi{10.1007/978-3-642-15155-2_54}.

\bibitem[Schmitz(2014)]{schmitz14}
S.~Schmitz.
\newblock Complexity bounds for ordinal-based termination.
\newblock In \emph{Proc.\ RP~2014}, volume 8762 of \emph{Lect. Notes in Comput.
  Sci.}, pages 1--19. Springer, 2014.
\newblock \doi{10.1007/978-3-319-11439-2_1}.

\bibitem[Schmitz(2016)]{schmitz16}
S.~Schmitz.
\newblock Complexity hierarchies beyond {{\complexity{ELEMENTARY}}}.
\newblock \emph{ACM Trans. Comput. Theory}, 8\penalty0 (1):\penalty0 3:1--3:36,
  2016.
\newblock \doi{10.1145/2858784}.

\bibitem[Schmitz(2017)]{schmitz17}
S.~Schmitz.
\newblock \emph{Algorithmic Complexity of Well-Quasi-Orders}.
\newblock Habilitation thesis, \'Ecole Normale Sup\'erieure Paris-Saclay, 2017.

\bibitem[S\'enizergues(1997)]{senizergues97}
G.~S\'enizergues.
\newblock The equivalence problem for deterministic pushdown automata is
  decidable.
\newblock In \emph{Proc.\ ICALP'97}, volume 1256 of \emph{Lect. Notes in
  Comput. Sci.}, pages 671--681. Springer, 1997.
\newblock \doi{10.1007/3-540-63165-8_221}.

\bibitem[S\'enizergues(1998)]{senizergues98}
G.~S\'enizergues.
\newblock Decidability of bisimulation equivalence for equational graphs of
  finite out-degree.
\newblock In \emph{Proc.\ FOCS'98}, pages 120--129. IEEE, 1998.
\newblock \doi{10.1109/SFCS.1998.743435}.

\bibitem[S\'enizergues(2001)]{senizergues01}
G.~S\'enizergues.
\newblock {$L(A)=L(B)$}? {D}ecidability results from complete formal systems.
\newblock \emph{Theor. Comput. Sci.}, 251\penalty0 (1--2):\penalty0 1--166,
  2001.
\newblock \doi{10.1016/S0304-3975(00)00285-1}.

\bibitem[S\'enizergues(2005)]{senizergues05}
G.~S\'enizergues.
\newblock The bisimulation problem for equational graphs of finite out-degree.
\newblock \emph{SIAM J.~Comput.}, 34\penalty0 (5):\penalty0 1025--1106, 2005.
\newblock \doi{10.1137/S0097539700377256}.

\bibitem[Srba(2004)]{srba04}
J.~Srba.
\newblock Roadmap of infinite results.
\newblock In \emph{Current Trends in Theoretical Computer Science}, volume~2,
  pages 337--350. World Scientific Publishing, 2004.
\newblock \doi{10.1142/9789812562494_0054}.
\newblock URL \url{http://people.cs.aau.dk/~srba/roadmap/}.

\bibitem[Srba(2009)]{srba09}
J.~Srba.
\newblock Beyond language equivalence on visibly pushdown automata.
\newblock \emph{Logic. Meth. in Comput. Sci.}, 5\penalty0 (1):\penalty0
  2:1--2:22, 2009.
\newblock \doi{10.2168/LMCS-5(1:2)2009}.

\bibitem[Stirling(2002)]{stirling02}
C.~Stirling.
\newblock Deciding {DPDA} equivalence is primitive recursive.
\newblock In \emph{Proc.\ ICALP'02}, volume 2380 of \emph{Lect. Notes in
  Comput. Sci.}, pages 821--832. Springer, 2002.
\newblock \doi{10.1007/3-540-45465-9_70}.

\bibitem[Wainer(1972)]{wainer72}
S.~S. Wainer.
\newblock Ordinal recursion, and a refinement of the extended {G}rzegorczyk
  hierarchy.
\newblock \emph{J.~Symb. Log.}, 37\penalty0 (2):\penalty0 281--292, 1972.
\newblock \doi{10.2307/2272973}.

\bibitem[Yin et~al.(2014)Yin, Fu, He, Huang, and Tao]{yin14}
Q.~Yin, Y.~Fu, C.~He, M.~Huang, and X.~Tao.
\newblock Branching bisimilarity checking for {PRS}.
\newblock In \emph{Proc.\ ICALP'14}, volume 8573 of \emph{Lect. Notes in
  Comput. Sci.}, pages 363--374. Springer, 2014.
\newblock \doi{10.1007/978-3-662-43951-7_31}.

\end{thebibliography}
\end{document}